\documentclass[11pt]{article}

\usepackage[letterpaper,margin=1in]{geometry}
\usepackage{amssymb}
\usepackage{amsmath}
\usepackage{amsthm}
\usepackage{hyperref}
\usepackage{tikz}
\usetikzlibrary{tikzmark,arrows.meta}
\usepackage{physics}
\usepackage{multirow}
\usepackage{cite}
\usepackage{caption}

\newtheorem{theorem}{Theorem}[section]
\newtheorem{corollary}[theorem]{Corollary}
\newtheorem{lemma}[theorem]{Lemma}

\newtheorem{remark}{Remark}

\newcommand{\p}{\mathbb{P}}
\newcommand{\R}{\mathbb{R}}
\newcommand{\cO}{\mathcal{O}}
\newcommand{\rb}{{r_\star}}
\newcommand{\one}{{\mathbf{1}}}
\newcommand{\zero}{{\mathbf{0}}}

\usepackage{algorithm}
\usepackage{algpseudocode}
\algrenewcommand\algorithmicindent{2em}

\DeclareMathOperator{\sign}{sign}
\DeclareMathOperator{\diag}{diag}

\title{On the Complexity of Low-Rank Matrix Signing and \\  Entrywise Power Matrix Factorization}

\author{Nicolas Gillis\thanks{Department of Mathematics and Operational Research. We acknowledge  the support by the European Union (ERC consolidator, eLinoR, no 101085607).  
Emails: firstname.lastname@umons.ac.be.} \and Subhayan Saha\footnotemark[1] \and Stefano Sicilia\footnotemark[1] $^,$\thanks{SS is a member of the Gruppo Nazionale Calcolo Scientifico-Istituto Nazionale di Alta Matematica (GNCS-INdAM).} \and Arnaud Vandaele\footnotemark[1]}

\date{University of Mons, Mons, Belgium}

\begin{document}
\maketitle



\begin{abstract}
Given a nonnegative matrix $X$, a factorization rank $r$ and {a positive integer $p$}, entrywise power matrix factorization (EPMF) looks for a low-rank matrix $X_r$ such that $X = |X_r|^{\circ p}$ (exact case) or $X \approx |X_r|^{\circ p}$ (approximate case), where $(\cdot)^{\circ p}$ denotes the componentwise exponent. EPMF includes the modulus model ($p=1$) and componentwise square factorization ($p=2$) as special cases, the latter being closely related to the square root rank. We analyze the computational complexity of  the exact decision problem and the Frobenius-norm approximation problem, and establish a complete complexity landscape. In the exact case, we show that EPMF is equivalent to the combinatorial problem of flipping the signs of the entries of a given matrix $X$ to obtain a rank-$r$ matrix, which we refer to as the low-rank matrix signing (LRMS) problem. We first show that LRMS, and hence exact EPMF, is strongly NP-hard, improving a weak NP-hardness result for the square-root-rank (Math. Prog., 2015). We then show that LRMS can be solved in polynomial time when $r$ is fixed. Moreover, when the rank $r$ is part of the input, we show that for generic matrices the algorithm is fixed-parameter tractable (FPT) in the parameter $r$; in fact, the running time is fixed-parameter linear in the number of entries of the input matrix. In the approximate case using the Frobenius norm as an error measure, we show that EPMF is NP-hard, already when $r=2$, the smallest nontrivial case. 
\end{abstract}

\textbf{Keywords}: 
nonlinear matrix decompositions, 
entrywise power matrix factorization, 
signless rank, 
low-rank matrix signing, 
fixed-parameter tractable, 
strong NP-hardness. 


\section{Introduction}\label{sec:intro}

Low-rank matrix approximation is a central tool in numerical linear algebra, data analysis, machine learning, and signal processing. Given a data matrix $X\in\mathbb R^{m\times n}$ and an integer $r \leq \min(m,n)$, the classical problem is to approximate $X$ by a matrix of rank at most $r$, or, equivalently, to find factors $W\in\mathbb R^{m\times r}$ and $H\in\mathbb R^{r\times n}$ such that $X\approx WH$. This is equivalent to linear dimensionality reduction: each column of $X$ is approximated by a linear
combination of the columns of~$W$.
The best known example is the truncated singular value decomposition (SVD), which gives an optimal rank-$r$ approximation in the Frobenius norm, and any unitary invariant norm, by the Eckart--Young theorem~\cite{eckart1936approximation, golub2013matrix, trefethen2022numerical}.    
Many variants, including robust PCA, sparse PCA, weighted low-rank approximation, and matrix completion, modify the loss function or impose additional constraints in order to reflect the structure of the data; see, e.g., \cite{udell2016generalized} and the references therein. 
The computational complexity of such models has been studied in depth.  
Computing the (truncated) SVD is a classical problem in numerical linear algebra, and can be done in polynomial time in $m, n, r$ and $\log(1/\varepsilon)$, where $\varepsilon$ is the desired precision; see, e.g.,~\cite{golub2013matrix, trefethen2022numerical}.   
The SVD can be used to compute the best rank-$r$ approximation in any unitary invariant norm, including the matrix $\ell_2$ norm and the Frobenius norm. However, as soon as different norms are used, finding the best rank-$r$ approximation becomes NP-hard, already when $r=1$; this is true for the componentwise $\ell_1$~\cite{gillis2018complexity} and $\ell_\infty$ norms~\cite{gillis2019low}, and weighted norms or when data is missing~\cite{gillis2011low}. 
Moreover, by writing $X_r = WH$, where $W$ has $r$ columns and $H$ has $r$ rows, it is often useful in practice to add constraints on the factors, $W$ and $H$, typically to improve the interpretability of the decomposition. For example, sparsity leads to sparse PCA and variants, and imposing a given sparsity pattern to the factors leads to an NP-hard problem~\cite{le2023spurious}. 
Another famous example are nonnegativity constraints leading to nonnegative matrix factorization (NMF) which is also NP-hard~\cite{vavasis2010complexity}, 
but can be solved in polynomial time in the exact case (that is, find, if possible, $W \geq 0$, $H \geq 0$ such that $X = WH$) when $r$ is treated as a constant, with a complexity of $(mn)^{\mathcal{O}(r^2)}$ operations~\cite{arora2012computing}.  
Analogous results were obtained for the positive semidefinite factorization~\cite{GRT13,shitov2018matrices}.
These results study worst-case complexity, but luckily there is a plethora of low-rank models that are tractable under reasonable assumptions. 
Two of the most widely known approaches to obtain such results are based on 
minimizing the nuclear norm as a proxy for the rank~\cite{recht2010guaranteed}, or on proving that the optimization landscape does not have spurious local minima, that is, all local minima are global~\cite{ge2017no, chi2019nonconvex}. 
This has been done for missing data~\cite{candes2010power, ge2016matrix}, robust variants that rely on other metrics than the Frobenius norm~\cite{chandrasekaran2011rank, candes2010robust}, NMF under the so-called separability assumption~\cite{arora2012computing}, and 
binary matrix decompositions~\cite{kueng2021binary, kueng2019binary}, to cite a few. 
It is out of the scope of this paper to review this literature.

Despite their success, linear low-rank models have limited expressive power.
Many data sets exhibit nonlinear structures that they cannot capture. 
This limitation has motivated the recent introduction of {nonlinear matrix decompositions} (NMDs)~\cite{saul2022nonlinear}, in which one approximates a matrix by applying a nonlinear function entrywise to a low-rank 
matrix $X_r$:  
\[
X\approx f(X_r). 
\]
Different choices of $f$ lead to different models and allow for various structural assumptions about the data. 
A prominent example is the ReLU model, 
$X \approx \max(0,X_r)$, 
introduced by Saul for sparse nonnegative data~\cite{saul2022nonlinear}. 
{Since then, several algorithmic approaches have been proposed for this model,} 
including
coordinate-descent schemes~\cite{awari2024coordinate}, 
accelerated alternating partial Bregman proximal gradient method~\cite{wang2025accelerated}, 
and extrapolated block coordinate methods~\cite{gillis2025extrapolated}. 
Another example is the componentwise square factorization (CSF):
$X\approx (WH)^{\circ 2}$. 
CSF is closely related to the square-root rank of nonnegative matrices, a quantity that appears in the compact representation of 
convex polytopes~\cite{lee2014square,fawzi2015positive}. 
CSF is also related to the {low-rank matrix signing} (LRMS) problem: given a matrix $M \in \R^{m \times n}$ and an integer $r \leq \min(m,n)$, decide whether there exists  a sign matrix $S \in \{\pm 1\}^{m \times n}$ such that $\rank(S \circ M) \leq r$, {where $\circ$ is the entrywise product.} 
LRMS is connected to the {signless rank} of a matrix, denoted by $\rank_{\pm}(A)$, which is defined as the minimum rank of $S \circ A$ over all sign matrices $S \in \{\pm 1\}^{m \times n}$~\cite{goucha2021phaseless}. We will study this problem in Sections~\ref{sec:strongNPhardness}~and~\ref{sec:exact-fixed-rank}. 
A natural generalization of the signless rank for complex matrices is referred to as the {phaseless rank} of a matrix \cite{goucha2021phaseless}, which is connected to the notion of equimodular classes of matrices \cite{Camion1966-wt}. 
CSF has also been used to obtain compact representations of nonnegative data and probabilistic circuits~\cite{loconte2024subtractive}, 
and recent work has developed coordinate-descent algorithms for the associated least-squares problem~\cite{lefebvre2024component}.  
CSF is also closely related to the Hadamard decomposition, defined as $X \approx X_1 \circ X_2$, where $X_i$ are low-rank matrices, that has been successfully used for example to compress or adapt neural networks~\cite{FedPara, huang2025hira}; see~\cite{gillis2026manifold} and the references therein for more details. 
The modulus model, $X\approx |WH|$, 
provides another unconstrained factorization model for nonnegative data~\cite{awari2025alternating}. 
Other nonlinearities, such as min-max and sigmoid functions, have also been considered in applications with bounded or binary observations~\cite{awari2025alternating, nguyen2025nonlinear}.

\subsection{Contribution of the paper} 

The computational complexity of (linear) low-rank matrix approximations has been extensively studied; see Section~\ref{sec:intro}.  
However, for non-linear models, computational complexity has not been explored yet. In this paper, we focus on entrywise power matrix factorization (EPMF), which unifies the modulus and componentwise square cases.
For a fixed {integer $p\geq 1$}, it considers the decomposition 
$X \approx |X_r|^{\circ p}$, where $X_r$ is a rank-$r$ matrix, and the absolute value and the power are applied entrywise. We consider two variants: given a nonnegative matrix $X$, a factorization rank $r$ and {an integer $p\geq 1$,} 
\begin{enumerate}
    \item \textsc{ExactEPMF}: compute, if possible, a rank-$r$ matrix $X_r$ such that $X = |X_r|^{\circ p}$. 

    \item \textsc{FroEPMF}: find a rank-$r$ matrix $X_r$ that minimizes $\left\|X - |X_r|^{\circ p}\right\|_F^2$.
\end{enumerate}

Our positive result concerns the exact fixed-rank regime.
After taking the entrywise $p$th root of~$X$, the problem \textsc{ExactEPMF} becomes an LRMS problem: given a nonnegative matrix $M$, decide whether one can choose the signs  of its entries so that the resulting matrix has rank at most $r$. 
We show that, for fixed $r$, this problem can be solved in polynomial time. As for exact NMF, the problem is polynomial-time solvable for fixed rank, although the exponent of the polynomial in the dimension of the input depends on $r$.  
The algorithm enumerates all possible sign patterns of non-singular basis $r\times r$ blocks, but it avoids a global enumeration of the signs by using the linear dependencies. 
Moreover, even when $r$ is given as part of the input, if the input matrix is generic, we show that the algorithm is fixed-parameter tractable (FPT); that is, it requires $f(r)\operatorname{poly}(m,n)$ arithmetic operations, where $f$ is a computable function. 

Our hardness results show that this favorable behavior is specific to exact fixed-rank feasibility or generic settings.
When the rank is part of the input, we show that \textsc{ExactEPMF} is strongly NP-hard, by a reduction from monotone not-all-equal 3-SAT, generalizing and improving a weak NP-hardness result for the square-root-rank case given by the authors of~\cite{fawzi2015positive}. 
Moreover, we show that the least-squares formulation, \textsc{FroEPMF}, is NP-hard, already for $r=2$, via a reduction from the decision version of the \textsc{Cut-Norm} problem.



 Table~\ref{tab:power-complexity} summarizes our results. 

\begin{table}[h]
\centering
\renewcommand{\arraystretch}{1.25}
\resizebox{\textwidth}{!}{
\begin{tabular}{c|c|c}
& \textsc{ExactEPMF} & \textsc{FroEPMF} \\
\hline
\multirow{2}{*}{$r$ part of the input}
& Strongly NP-hard (Section~\ref{sec:strongNPhardness})
& \multirow{2}{*}{\tikzmarknode{froHARD}{{Strongly} NP-hard}}
\\
& FPT in the generic case (Section~\ref{sec:FPTgeneric}) & \\
\hline
$r$ fixed &
Polynomial time algorithm (Section~\ref{sec:exact-fixed-rank})
& \tikzmarknode{fixedHARD}{{Strongly} NP-hard for $r=2$ (Section~\ref{sec:fro-fixed-rank})} 
\\ 
\end{tabular}
\begin{tikzpicture}[remember picture,overlay]
\draw[-{Stealth[length=5pt]},thick]
  ([xshift=2pt,yshift=-1pt]fixedHARD.east)
  to[out=0,in=0,looseness=1.6]
  ([xshift=2pt,yshift=1pt]froHARD.east);
\end{tikzpicture}}
\caption{Contributions of this paper:  complexity landscape for EPMF.}
\label{tab:power-complexity}
\end{table}

\subsection{Proof overview and algorithmic ideas}


We give a more detailed overview of our results and the techniques used to achieve them.

\paragraph{ExactEPMF.}

Our results on \textsc{ExactEPMF} are established via exploring the connections with the equivalent LRMS, namely deciding whether a nonnegative matrix admits a signing whose rank is at most $r$. Indeed, after taking entrywise $p$th roots, the magnitudes of the entries are fixed, and the only remaining freedom is to assign signs so that the resulting matrix has rank at most~$r$. 

\begin{enumerate}
    \item For proving the strong NP-hardness (Theorem \ref{thm:satiffsignrank} in Section~\ref{sec:strongNPhardness}), we reduce \textsc{ExactEPMF} from the \textsc{Monotone NAE-3SAT} whose instance is given by 
    \[
     \phi = \bigwedge_{l=1}^m \text{NAE}(x_{i_l}, x_{j_l}, x_{k_l}), 
    \]
    with $n$ Boolean variables $x_1,\dots,x_n$, and $m$ clauses, where $\text{NAE}(x_{i_l}, x_{j_l}, x_{k_l})$ means that $x_{i_l}, x_{j_l}, x_{k_l}$ cannot be all equal.  
    The reduction constructs an $(n+m+3) \times (n+4)$ matrix with entries in $\{0,1,2,5\}$. 
    The first $n+3$ rows of the construction contain a submatrix that has rank $n+3$, 
    regardless of the signing chosen for its entries, while the remaining $m$ rows correspond to the clauses and 
    are constructed such that the matrix has a signing of rank $n+3$ {if and} only if there exists     
    an assignment $x$ satisfying the not-all-equal conditions. 
    
    \item For a fixed $r$ and $M=X^{\circ 1/p}$, a naive method to solve \textsc{ExactEPMF} would be the following algorithm for the equivalent \textsc{LRMS} problem for $M$ - try all possible sign matrices $S\in \{\pm 1\}^{m\times n}$ and  to check whether $S\circ M$ has rank $r$, leading to a computational cost larger than $\cO(2^{mn})$. In Section~\ref{sec:exact-fixed-rank}, we instead propose a tractable algorithm (Algorithm~\ref{alg:improvedfinal}) that determines if $M$ can be written as $M=|Y|$, where $Y\in \R^{m\times n}$ is a rank-$k$ matrix for some constant $k \leq r$ in $\operatorname{poly}(m,n)$ running time (Theorem \ref{thm:fixed-rank-signing}). The algorithm exploits the fact that the matrix $Y$ has rank $k$ if and only if it contains a nonsingular $k \times k$ submatrix $Y(I,J)$. This submatrix determines the following {skeleton decomposition} 
    \[
        Y = Y(:,J) Y(I,J)^{-1} Y(I,:).
    \]
    We go over all the possible subsets $I$ and $J$ and reduce the problem to check whether there is a signing of the candidates $W:=M(:,J)$ and $H:=Y(I,J)^{-1} Y(I,:)$ such that $M=|WH|$. 
    In order to verify the validity of these candidates, we check if they satisfy the associated nonlinear constraints of the form $|W(i,:) H(:,j)| = M(i,j)$, which we rewrite as linear constraints on $k \times k$ symmetric matrices: $\langle W(i,:)^\top W(i,:), H(:,j) H(:,j)^\top \rangle = M(i,j)^2$. 
    Verifying all such constraints would still be exponential in the dimensions of the matrix, but by using the crucial observation that the space of  symmetric matrices has dimension $\binom{k+1}{2}$, we show that it is enough to explore only a small subset of these constraints. 
    
    \item In Section~\ref{sec:FPTgeneric}, we prove that, for generic inputs of given rank $r$, Algorithm~\ref{alg:improvedfinal} leads to a stronger running-time guarantee $2^{\cO(r^3)}\cdot \cO(mn)$ (see Theorem~\ref{thm:FPTgeneric}). We use the following two observations:
    \begin{itemize}
        \item For any $k\leq r$, every $k \times k$ submatrix of a generic rank-$r$ matrix is nonsingular. 
        \item For any generic $N_k = \binom{k+1}{2}$ vectors $\{w_i\}_{i\in[N_k]}$, the matrices $\{w_iw_i^\top\}_{i \in [N_k]}$ span the entire space of symmetric matrices.
    \end{itemize}
    The remaining search depends only on $r$, while the number of steps required to process the matrix is linear in $mn$. 
\end{enumerate}

\paragraph{FroEPMF.} To prove the strongly NP-hardness of the approximate EPMF in the Frobenius norm for $r=2$, we rely on a reduction from the \textsc{Cut-Norm} decision problem. 
Formally, 
\begin{itemize}
    \item given a matrix $M\in\{\pm 1\}^{s\times t}$ and $D\in\mathbb{N}$, the \textsc{Cut-Norm} decision problem asks whether there exist $(u,v)\in\{\pm 1\}^s\times\{\pm 1\}^t$ such that $\langle M,uv^\top\rangle \geq D$;
    \item given a matrix $X\in\mathbb{R}_{\geq 0}^{m\times n}$, a fixed $p\geq 1$, and $T>0$ the \textsc{FroEPMF} problem consists of deciding whether there exist $(W,H)\in\mathbb{R}^{m\times 2}\times\mathbb{R}^{2\times n}$ such that $\|X-|WH|^{\circ p}\|_F^2\leq T$.
\end{itemize}
Given a cut-norm instance, the technical reduction constructs a block matrix $X$ of size $(s+2N)\times(t+2N)$ with entries in $\{0,1,2^p\}$, for a sufficiently large $N=N(s,t,p)$. In Theorem~\ref{thm:red_CUT_to_R2FroEPMF}, we show that the given cut-norm instance is equivalent to this instance of the rank-$2$ \textsc{FroEPMF} problem with $T=\frac{st-D}{2}$.

\subsection{Outline and notation of the paper}

The paper is organized as follows. 
Section~\ref{sec:exact-nonfixed-rank} proves strong NP-hardness of \textsc{ExactEPMF} when the rank $r$ is part of the input. 
When the rank $r$ is not part of the input, meaning it is treated as a constant, 
we provide in Section~\ref{sec:exact-fixed-rank} a  polynomial-time algorithm
for \textsc{ExactEPMF}. Moreover, in Section~\ref{sec:FPTgeneric} we show that this algorithm is fixed-parameter tractable (FPT) for generic input matrices even when $r$ is a part of the input. 
Section~\ref{sec:fro-fixed-rank} proves strong NP-hardness of \textsc{FroEPMF}, already when $r=2$, the first non-trivial case. Section~\ref{sec:concl} draws the conclusions and the appendix includes some technical lemmas.

\paragraph{Notation}

 For the first $n$ natural numbers, we use the standard notation $[n]=\{1,\dots,n\}$. The vectors $e_1,\dots,e_n$ denote the canonical basis of $\R^{n}$. For a matrix $A$, we use a \textsc{Matlab}-like notation: we denote its $i$th row by $A(i,:)$, its $j$th column by $A(:,j)$ and its $(i,j)$ entry by $A(i,j)$, sometimes also indicated with the lowercase associated letter $a_{ij}$. We use $\circ$ for the entrywise multiplication, both for matrices and vectors, but also for the $p$th entrywise power of a matrix $A$, which is denoted by $A^{\circ p}$. The transpose of $A$ is $A^\top$. For any vector $v$, $\diag(v)$ is the diagonal matrix whose diagonal is $v$, and $\|v\|=\sqrt{\langle v,v\rangle}$ is the $\ell_2$-norm of $v$. The all-zeros matrix and the all-ones matrix of size $m\times n$ are denoted, respectively, by $\mathbf{0}_{m\times n}$ and $\mathbf{1}_{m\times n}$, and in case of the vectors we write $\mathbf{0}_n$ and $\mathbf{1}_n$. The $n\times n$ identity matrix is $I_n$.

\section{NP-hardness of ExactEPMF}\label{sec:exact-nonfixed-rank}

We first consider \textsc{ExactEPMF} when the rank is part of the input.
We first generalize the result of~\cite{fawzi2015positive}, which proves the weak NP-hardness of computing
the square-root rank, that is, \textsc{ExactEPMF} for $p=2$, using a reduction from \textsc{Partition} which is weakly NP-hard~\cite{garey2002computers}; see Section~\ref{sec:weakNPhardness}. 
We then prove strong NP-hardness in Section~\ref{sec:strongNPhardness}. 

\subsection{Weak NP-hardness}\label{sec:weakNPhardness}

The reduction below is a direct adaptation of the construction in~\cite{fawzi2015positive} from \textsc{Partition}, which we provide here for completeness. Let us formally define the decision versions of both problems. 

\noindent
\fbox{\begin{minipage}{0.99\textwidth}\textsc{Partition.} 

\noindent
\emph{Given:} a multi-set\footnotemark~of positive integers $\mathcal{S}=\{s_1,...,s_N\}$.  

\noindent
\emph{Question:}
does there exist a partition of $\mathcal{S}$ into two sub multi-sets $\mathcal{S}_1$ and $\mathcal{S}_2$ such that the sum of the numbers in $\mathcal{S}_1$ equals the sum of the numbers in $\mathcal{S}_2$?
\end{minipage}}\\

\footnotetext{By multi-set we mean a collection of elements that can possibly contain several times the same element.}

\vspace{0.1cm}

\noindent
\fbox{\begin{minipage}{0.99\textwidth}\textsc{ExactEPMF.}

\noindent
\emph{Given:} a matrix $X\in\mathbb R_{\ge 0}^{m\times n}$ and a positive integer $r$. 

\noindent
\emph{Question:} for a fixed integer $p\geq 1$, 
do there exist matrices $W\in\mathbb R^{m\times r}$ and $H\in\mathbb R^{r\times n}$ such that $X=|WH|^{\circ p}$?

\end{minipage}}\\
Let $\mathcal{S}$ be an instance of \textsc{Partition}. We construct an instance $(X,r)$ of \textsc{ExactEPMF} as follows. Set $m = n = N+1$ and $r=N$. Using $s=(s_1,\ldots,s_N)^\top$, set 
    $$X=
        \begin{pmatrix}
        I_N & s^{\circ p}\\
        \mathbf 1_N^\top & 0
        \end{pmatrix}
        \in\mathbb R_{\ge0}^{(N+1)\times(N+1)}.
        $$

The construction is polynomial in the size of the input since $p$ is fixed.  
Intuitively, the reduction relies only on the freedom of choosing signs of prescribed entrywise $p$th roots.   
\begin{theorem}\label{thm:exactEPMF-weakly-nphard}
The instance $(X,r)$ is a yes-instance of \textsc{ExactEPMF} if and only if $\mathcal{S}$ is a yes-instance of \textsc{Partition}.
Hence, \textsc{ExactEPMF} is weakly NP-hard when $r$ is part of the input.
\end{theorem}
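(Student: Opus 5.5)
First reduce \textsc{ExactEPMF} to low-rank matrix signing. Since $p$ is fixed and every entry of $X$ is nonnegative, $X=|Y|^{\circ p}$ with $\rank(Y)\le r$ is equivalent to $|Y| = X^{\circ 1/p}$. Hence the instance is a yes-instance if and only if some sign matrix $S\in\{\pm1\}^{(N+1)\times(N+1)}$ gives $\rank(S\circ M)\le N$, where
\[
M = X^{\circ 1/p} = \begin{pmatrix} I_N & s\\ \mathbf 1_N^\top & 0\end{pmatrix}.
\]
The matrix is square of order $N+1$, so rank at most $N$ means exactly that $\det(S\circ M)=0$. The whole proof therefore reduces to computing this determinant for an arbitrary signing.

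Write a signing as
\[
S\circ M=\begin{pmatrix} D & \sigma\circ s\\ \tau^\top & 0\end{pmatrix},
\]
with $D=\diag(\delta)$, $\delta,\sigma,\tau\in\{\pm1\}^N$. The bottom-right entry is $0$, so its sign is irrelevant. The block $D$ is invertible, so the Schur complement formula gives
\[
\det(S\circ M)=\det(D)\,\bigl(0-\tau^\top D^{-1}(\sigma\circ s)\bigr) = -\det(D)\sum_{i=1}^N \tau_i\delta_i\sigma_i s_i .
\]
Since $\det D=\pm1\neq0$, the signed matrix is singular if and only if $\sum_i \epsilon_i s_i=0$, where $\epsilon_i=\tau_i\delta_i\sigma_i\in\{\pm1\}$.

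The two directions then follow.
\begin{itemize}
\item[($\Leftarrow$)] Given a balanced partition, set $\epsilon_i=1$ on $\mathcal S_1$ and $\epsilon_i=-1$ on $\mathcal S_2$. Take $\delta=\tau=\mathbf 1$ and $\sigma=\epsilon$. This gives $Y=S\circ M$ of rank at most $N$ with $|Y|^{\circ p}=X$.
\item[($\Rightarrow$)] Any rank-$\le N$ factorization $X=|WH|^{\circ p}$ yields $Y=WH$ with $|Y|=M$, so $Y=S\circ M$ for some sign matrix $S$. Singularity then forces $\sum_i\epsilon_i s_i=0$, and splitting the indices by the sign of $\epsilon_i$ gives a valid partition.
\end{itemize}

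It remains to argue polynomiality and weak NP-hardness. The instance has size $N+1$, and its entries $s_i^p$ have bit length $p$ times that of $s_i$, which is polynomial because $p$ is fixed. Since \textsc{Partition} is (weakly) NP-hard, weak NP-hardness of \textsc{ExactEPMF} with $r$ part of the input follows.

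There is no real obstacle here. The step that needs care is the equivalence between rank at most $N$ and singularity, which holds because the matrix is square of order $N+1$. One should also note that the diagonal block stays invertible under every signing, which is what makes the Schur complement argument valid for all $S$.
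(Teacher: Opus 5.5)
Your proof is correct, and it rests on the same fact as the paper's: the signed $I_N$ block is invertible under every signing, so the signed matrix is singular if and only if $\sum_i \epsilon_i s_i = 0$ for some signs $\epsilon_i$. The only difference is packaging. You obtain this through the Schur-complement formula $\det(S\circ M) = -\det(D)\sum_i \tau_i\delta_i\sigma_i s_i$, which settles both directions at once, whereas the paper argues each direction separately: it writes the last row as a $\sigma$-combination of the first $N$ rows, and conversely expresses the last column in the span of the first $N$ columns with coefficients $\pm s_i$.
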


\begin{proof}
\textbf{If part.}
Suppose that $\mathcal{S}$ is a yes-instance of \textsc{Partition}.
Then there exist signs $\sigma_i\in\{\pm1\}$ such that $\sum_{i=1}^N \sigma_i s_i=0$. 
Define
\[
Y=\begin{pmatrix}
I_N & s\\
\sigma^\top & 0
\end{pmatrix}.
\]
Then $|Y|^{\circ p}=X$.
The last row of $Y$ is the linear combination of the first $N$ rows with coefficients $\sigma_1,\ldots,\sigma_N$ since $\sum_{i=1}^N \sigma_i s_i=0$.
Thus $\rank(Y)\leq N=r$. 
Therefore $X=|WH|^{\circ p}$ for some
$W\in\mathbb R^{(N+1)\times N}$, $H\in\mathbb R^{N\times(N+1)}$, so the \textsc{ExactEPMF} instance is a yes-instance. 

\textbf{Only if part.}
Suppose that the \textsc{ExactEPMF} instance is a yes-instance.
Then there exists a matrix $Y$ such that
$|Y|^{\circ p}=X$, $\rank(Y)\leq N$.
Since the upper-left block of $X$ is $I_N$, the corresponding block of $Y$ is a signed identity matrix, hence is invertible.
Therefore $\operatorname{rank}(Y)\geq N,$ and so $\operatorname{rank}(Y)=N$.
The first $N$ columns of $Y$ are linearly independent, and the last column must lie in their span, that is, 
$Y(:,N+1)=\sum_{i=1}^N \alpha_i Y(:,i)$ for some $\alpha$'s.  
The first $N$ entries of this relation indicate that $\alpha_i=\pm s_i$ while the last one gives 
$\sum_{i=1}^N \sigma_i s_i=0$  
for some signs $\sigma_i\in\{\pm1\}$.
Hence, the \textsc{Partition} instance is a yes-instance.
\end{proof}

\subsection{Strong NP-hardness}\label{sec:strongNPhardness}

To prove strong NP-hardness of \textsc{ExactEPMF}, we rely on the Monotone Not-All-Equal 3-Satisfiability problem (NAE-3SAT). 

For Boolean variables, $x_1,\dots,x_n\in\{0,1\}$,   $\text{NAE}(x_1,\dots,x_n)$ is satisfied by the assignment $x$ 
if and only if there exists $i \neq j \in [n]$ such that $x_i$ is $1$ and $x_j$ is $0$. We say that a map $\phi: \{0,1\}^n \xrightarrow[]{} \{0,1\}$ is an instance of the Monotone NAE-3SAT problem if it can be written as 
\begin{equation} \label{eq:nae3sat}
   \phi = \bigwedge_{l=1}^m \text{NAE}(x_{i_l}, x_{j_l}, x_{k_l}) 
\end{equation}
over Boolean variables $x_1, \dots, x_n$. Monotone \textsc{NAE-3SAT} is known to be strongly NP-complete, even under different restrictions \cite{Schaefer78,DARMANN2020}. The following is the decision version of the problem:

\noindent
\fbox{\begin{minipage}{0.99\textwidth}\textsc{Monotone NAE-3SAT.}

\noindent
\emph{Given:} a Monotone NAE-3SAT instance $\phi$ over Boolean variables $x_1, \dots, x_n$. 

\noindent
\emph{Question:}
does there exist an assignment $x_1, \dots, x_n \in \{0,1\}^n$ such that $\phi(x_1, \dots,x_n)$ is satisfied?
\end{minipage}}\\




\vspace{0.25cm}

To prove strong NP-hardness of \textsc{ExactEPMF}, we rely on the following intermediate combinatorial problem. 

\noindent
\fbox{\begin{minipage}{0.99\textwidth}\textsc{Low-Rank Matrix Signing (LRMS)}

\noindent
\emph{Given:} a nonnegative integer matrix $M\in\mathbb{R}_{\geq 0}^{s\times t}$ and an integer $r\leq \min(s,t)$. 

\noindent
\emph{Question:}
does there exist a sign matrix $S\in \{\pm 1\}^{s\times t}$ such that
$\rank(S\circ M)\leq r$?
\end{minipage}}\\

It remains open whether the same hardness result holds for binary input matrices.

Given a nonnegative matrix $M$, LRMS requires flipping some signs of the entries of $M$ in order to make its rank at most $r$.
This problem is equivalent to \textsc{ExactEPMF} on instances of the form $X=M^{\circ p}$, where $p\geq 1$ is a fixed integer.
Indeed, if $S\circ M=WH$ with $\rank(WH)\leq r$, then
\[
X=M^{\circ p}=|S\circ M|^{\circ p}=|WH|^{\circ p},
\]
so $(X,r)$ is a yes-instance of \textsc{ExactEPMF}. 
Conversely, if $X=M^{\circ p}=|WH|^{\circ p}$, then $|WH|=M$, and hence $WH=S\circ M$ for the sign matrix $S=\operatorname{sign}(WH)$, with $\rank(S\circ M)\leq r$.
Therefore, it suffices to prove strong NP-hardness of LRMS. The corresponding hardness result for \textsc{ExactEPMF} then follows immediately by mapping $M$ to $M^{\circ p}$.

\paragraph{The reduction.}
For a Monotone NAE-3SAT instance $\phi = \land_{l=1}^m \text{NAE}(x_{i_l}, x_{j_l}, x_{k_l})$ over $n$ variables $x_1, \dots, x_n$, we define the $m \times n$ matrix $H_{\phi}$ with the $\ell$th row corresponding to the $\ell$th $\text{NAE}$ clause given by $(e_{i_\ell} + e_{j_\ell} + e_{k_\ell})^\top$. 
We then construct the following $(m+n+3) \times (n+4)$ matrix 
\begin{equation}\label{eq:defninstanceAphi}
   M_\phi
=
\left(
\begin{array}{cccc|c}
1 & 1 & 0 & 0 & \mathbf{0}_{1\times n}\\
1 & 0 & 1 & 0 & \mathbf{0}_{1\times n}\\
5 & 0 & 0 & 1 & \mathbf{0}_{1\times n}\\
\hline
\mathbf{0}_{n\times 1}
&
\mathbf{1}_{n}
&
2\mathbf{1}_{n}
&
\mathbf{0}_{n\times 1}
&
I_n\\
\hline
\mathbf{0}_{m\times 1}
&
\mathbf{0}_{m\times 1}
&
\mathbf{0}_{m\times 1}
&
\mathbf{1}_{m}
&
H_\phi
\end{array}
\right). 
\end{equation} 
Note that for all sign patterns $S \in \{\pm 1\}^{(m+n+3) \times (n+4)}$, $\rank(S \circ M_\phi) \in \{n+3, n+4\}$. 
Indeed consider the submatrix with the last $n+3$ columns and the first $n+3$ rows: 
\[
S(1:n+3,2:n+4)\circ \left(\begin{array}{c|c}
I_3 & 0 \\
\hline
* & I_n
\end{array}\right),\]
which has a non-zero determinant regardless of the entries of $S$, hence $\rank(S \circ M_\phi) \in \{n+3,n+4\}$.

We can now prove the following result. 
\begin{theorem}\label{thm:satiffsignrank}
    The instance $\phi$ is satisfiable if and only if there exists a sign pattern $S$ such that $\rank(S \circ M_{\phi}) = n+3$. 
    \begin{proof}
        \textbf{If part.} Any assignment $x$ of $\phi$ uniquely determines
        \[
         \tilde z(x)=
         \begin{pmatrix}
             1 \\
             1 \\
             1 \\
             5 \\ 
             z \\             
         \end{pmatrix},
         \qquad
         S_\phi(x)=
         \left(
         \begin{array}{cccc|c}
            1 & -1 & 1 & 1 & \one_n^\top \\
            1 & 1 & -1 & 1 & \one_n^\top \\
            1 & 1 & 1 & -1 & \one_n^\top \\
            \hline
            \one_n & \one_n & -v & \one_n & \diag(v) \\
            \hline
            \one_m & \one_m & \one_m & -\one_m & \Sigma \\
         \end{array}
         \right),
        \]
        where $z=z(x)=2x+\one_n$, $v=v(x) = \mathbf{1}_n-2x \in \{\pm 1\}^n$ and $\Sigma=\Sigma_\phi(x)\in \{\pm 1\}^{m\times n}$ is such that  
        \[
         \Sigma(\ell,[i_\ell,j_\ell,k_\ell])=[f(x_{i_\ell},\sigma_{\ell}),f(x_{j_\ell},\sigma_{\ell}),f(x_{k_\ell},\sigma_{\ell})] \; \text{ for all $\ell\in[m]$},
        \]
        with
        \[
        f(y,\sigma)=y- (-1)^\sigma(1-y)\in\{\pm1\},\ \text{ where } y \in \{0,1\}  \text{ and }  \sigma_{\ell}=x_{i_\ell}+x_{j_\ell}+x_{k_\ell}, 
        \] 
        and the other entries of $\Sigma$ are  equal to $1$. Note that $\sigma_\ell \in \{1,2\}$ for an instance 
        to be satisfied.
        One can check that $x$ satisfies $\phi$ if and only if $\tilde z=\tilde z(x)$ is in the kernel of 
        \[
         M_\phi \circ S_\phi(x)=
         \left(
         \begin{array}{cccc|c}
            1 & -1 & 0 & 0 & \zero_n^\top \\
            1 & 0 & -1 & 0 & \zero_n^\top \\
            5 & 0 & 0 & -1 & \zero_n^\top \\
            \hline
            \zero_n & \one_n & -2v & \zero_n & \diag(v) \\
            \hline
            \zero_m & \zero_m & \zero_m & -\one_m & \Sigma\circ H_\phi \\
         \end{array}
         \right).
        \]
        Let us denote 
        \begin{equation}
            \label{eq:abc}
            \begin{pmatrix} 
              a \\
              b \\
              c \\
            \end{pmatrix}=(M_\phi\circ S_\phi)\tilde{z}, \qquad a\in \R^3, \ b\in \R^n, \ c\in \R^m.
        \end{equation}
        Let $x$ be an assignment that satisfies $\phi$. The vector $a$ is zero by the choice of $\tilde{z}$. For every entry of $b$, 
        \[
         b_i=1-2 v_i+v_i \tilde{z}_{i+4}=1-2(-2x_i+1)+(-2x_i+1)(2x_i+1)=1+4x_i-2-4x_i+1=0,
        \]
        where we used $x_i^2=x_i$. Moreover, for every $\ell\in[m]$, 
        \begin{align*}
            c_\ell & = -5+\sum_{h\in\{i,j,k\}}(2x_{h_\ell}+1)(x_{h_\ell}-(-1)^{\sigma_\ell}(1-x_{h_\ell}))\\
            & = -5+\sum_{h\in\{i,j,k\}} 3x_{h_\ell}-(-1)^{\sigma_\ell}+(-1)^{\sigma_\ell} x_{h_\ell} =-5+3{\sigma_\ell} -3(-1)^{\sigma_\ell}+(-1)^{\sigma_\ell} {\sigma_\ell},
        \end{align*}
        which vanishes if and only if $\sigma_\ell\in \{1,2\}$, that are exactly the values corresponding to an assignment $x$ that satisfies $\phi$. 
        
        \textbf{Only if part.} Suppose that $\rank(S\circ M_\phi)=n+3$ for some
$S\in\{\pm1\}^{(m+n+3)\times(n+4)}$.
Let $\tilde{z}\neq 0$ span the one-dimensional kernel of $S\circ M_\phi$.
We may assume
w.l.o.g.\ that $\tilde z\geq0$ since flipping the signs of the $j$th 
column of $S$ and of the
entry $\tilde z_j$ leaves the equations $(S\circ M_\phi)\tilde z=0$
unchanged.
Since $\tilde z\geq0$, the three first equations forces the equality of the absolute values of its two terms.
Thus
\[
        \tilde z_1=\tilde z_2=\tilde z_3,
        \qquad
        \tilde z_4=5\tilde z_1.
\]
If $\tilde z_1=0$, then $\tilde z_2=\tilde z_3=\tilde z_4=0$, and the $i$th equation gives
$\tilde z_{4+i}=0$ for all $i$, contradicting $\tilde z\neq0$.
Hence we may assume $\tilde z_1=1$ by rescaling,
and therefore $\tilde z_2=\tilde z_3=1$, $\tilde z_4=5$.
The $(3+i)$th row  of $S\circ M_\phi$ gives
\[
S(3+i,2)+2S(3+i,3)+S(3+i,4+i)\tilde z_{4+i}=0.
\]
It means that $\tilde z_{4+i} = |S(3+i,2)+2S(3+i,3)|\in\{1,3\}$ since $S(3+i,2)+2S(3+i,3)\in\{\pm1,\pm3\}$.
We set
\[
x_i=\frac{\tilde z_{4+i}-1}{2}\in\{0,1\}.
\] 
For the $\ell$th clause in $\phi$ of the form $\textsc{NAE}(x_{i_\ell},x_{j_\ell},x_{k_\ell})$, the corresponding $(n+3+\ell)$th  row of $S \circ M_{\phi}$ has the form $\pm 5 \pm \tilde z_{4+i_\ell} \pm \tilde z_{4+j_\ell} \pm \tilde z_{4+k_\ell}$.  
Since, $\tilde z_{4+i} \in \{1,3\}$ for all $i \in [n]$, if the three entries $z_{4+i_\ell}, z_{4+j_\ell}, z_{4+k_\ell}$ are equal to $1$, the sum can be at most $3$, so it cannot cancel out $5$. Hence, all three  $x_{i_\ell},x_{j_\ell},x_{k_\ell}$ cannot be $0$ together. 
If all these three entries of $\tilde z$ are set to $3$, which corresponds to the case when all  $x_{i_\ell},x_{j_\ell},x_{k_\ell}$ are equal to $1$, the sum $\pm \tilde z_{4+i_\ell} \pm \tilde z_{4+j_\ell} \pm \tilde z_{4+k_\ell} \in \{\pm 3, \pm 9\}$, and hence cannot cancel out $5$. {Instead, if $z_{4+i_\ell},z_{4+j_\ell},z_{4+k_\ell}$ match the configuration $\{1,1,3\}$ or $\{1,3,3\}$, then we can make the corresponding row vanish, since $-5-1+3+3=0=5-1-1-3$.}
Thus we can conclude that $x$ is an assignment that satisfies $\phi$. 
    \end{proof}
\end{theorem}

\paragraph{Discussion.} It follows from the reduction in \eqref{eq:defninstanceAphi} that \textsc{LRMS} over matrices with entries in $\{0,1,2,5\}$ is strongly \textsc{NP}-hard. This  implies that $\textsc{ExactEPMF}$ over matrices with entries in $\{0,1,2^p,5^p\}$ is strongly \textsc{NP}-hard as well. One could ask whether such a hardness result holds over binary matrices; we leave this for a future investigation.

\begin{corollary} \label{thm:exactEPMF-strongly-nphard}
\textsc{ExactEPMF} is strongly NP-hard when $r$ is part of the input.
\begin{proof}
We give a polynomial-time many-one reduction from \textsc{Monotone NAE-3SAT} to \textsc{ExactEPMF}. Let
\[
\phi=\bigwedge_{l=1}^{m}\operatorname{NAE}(x_{i_l},x_{j_l},x_{k_l})\]
be a  \textsc{Monotone NAE-3SAT} instance over the variables $(x_1,\dots,x_n)$. Let  $M_\phi\in~\mathbb{Z}^{(m+n+3)\times(n+4)}$ be as in \eqref{eq:defninstanceAphi}, $X_\phi:=M_\phi^{\circ p}$ and set $r_\phi:=n+3$. Since $p$ is fixed, this construction is computable in polynomial time in the size of $\phi$ and hence $(X_\phi,r_\phi)$ is a valid instance of \textsc{ExactEPMF}.
We show that $(X_\phi,r_\phi)$ is a yes-instance of \textsc{ExactEPMF} if and only if there exists $S \in \{\pm 1\}^{(m+n+3) \times (n+4)}$ such that $\rank(S \circ M_\phi) \leq r_\phi$. 

\textbf{If part.} Suppose that $(X_\phi,r_\phi)$ is a yes-instance of \textsc{ExactEPMF}. Then there exist matrices $W \in \R^{(m + n + 3) \times r_\phi}$ and $H \in \R^{r_\phi \times (n+4)}$ such that $X_\phi = |WH|^{\circ p}$. Then $M_\phi = X_\phi^{\circ \frac{1}{p}} = |WH|$ and hence, there exists a sign matrix $S$ such that $S \circ M_\phi = WH$. Hence $\rank(S\circ M_\phi) \leq r_\phi$. 

\textbf{Only if part.} Suppose there exists a sign matrix $S$ such that  $\rank(S \circ M_\phi) \leq r_\phi$. Then $S \circ M_{\phi} = WH$ for some matrices $W \in \R^{(m + n + 3)  \times r_\phi}$ and $H \in \R^{r_\phi \times (n+4)}$.
Since $M_\phi \geq 0$, taking entrywise absolute values on both sides, we get that $M_\phi = |WH|$ and consequently, $X_\phi = |WH|^{\circ p}$ which shows that $(X_\phi,r_\phi)$ is a yes-instance of \textsc{ExactEPMF}.
Using this and Theorem \ref{thm:satiffsignrank}, we conclude that $\phi$ is a satisfiable instance of \textsc{Monotone NAE-3SAT} if and only if  $(X_\phi,r_\phi)$ is a yes-instance of \textsc{ExactEPMF}.

Since $M_\phi$ has entries in $\{0,1,2,5\}$ and $p\geq 1$ is fixed, the matrix $X_\phi = M_\phi^{\circ p}$ has entries in the set $\{0,1,2^p,5^p\}$.
Hence the reduction uses bounded integer entries, which proves strong NP-hardness.
\end{proof}

\end{corollary}


\section{Tractable algorithms for LRMS and implications for ExactEPMF}\label{sec:exact-fixed-rank}

We now turn to positive results. 
We first show in Section~\ref{sec:polytimefixedr} that LRMS can be solved in polynomial time when $r$ is fixed. In fact, we propose and implement an algorithm to do so (Algorithm~\ref{alg:improvedfinal}). 
Then in Section~\ref{sec:improvFPT}, we propose two improvements of this algorithm which allow us  to show that our algorithm is FPT in the rank $r$ for generic matrices. We illustrate these results numerically on some examples. 
Finally, we show what these complexity results for LRMS extend to  \textsc{ExactEPMF} in Section~\ref{sec:complexexactEPMF}, as both problems are equivalent.  

\subsection{Low-rank matrix signing with fixed rank} \label{sec:polytimefixedr} 

As we have seen in the previous section, exact EPMF is equivalent to LRMS. 
Interestingly, LRMS is complementary to a well-studied problem:  given a sign pattern $S \in \{\pm 1\}^{m \times n}$, find a matrix $X_r$ of rank at most $r$ such that the sign of the entries of $X_r$ coincide with $S$. 
Bhangale and Kopparty~\cite{BK15} 
proved that 
this problem is $\exists\mathbb{R}$-complete already when $r=3$. 
In this section, we prove that,
for every fixed~$r$, LRMS    
is polynomial-time solvable. 

Before doing so, let us recall a result about the skeleton decomposition which we will use to construct a rank-$r$ factorization.   
\begin{lemma} \label{lem:skeleton}
    The matrix $Y\in \R^{m\times n}$ is a rank-$r$ matrix if and only if there exist index sets $I\subseteq [m]$ and $J\subseteq [n]$ such that $|I|=|J|=r$ and $\rank(Y(I,J)) = r$. Moreover, 
    \begin{equation}\label{eq:skeleton}
        Y = Y(:,J) Y(I,J)^{-1} Y(I,:).
    \end{equation}
    \begin{proof}
        The first part of the statement is a well-known fact in linear algebra. For the second part, since $Y(:,J)$ has rank $r$ it is a basis of the image of $Y$; hence there exists $B\in \R^{r\times n}$ such that $Y=Y(:,J)B$. Taking the rows related to the indices of $I$ yields $Y(I,:)=Y(I,J)B$, that is $B=Y(I,J)^{-1} Y(I,:)$, and the claim is straightforward.
    \end{proof}
\end{lemma} 
The skeleton decomposition~\cite{goreinov1997theory} is a special type of CUR approximation;  
see, e.g.,~\cite{kishore2017literature} and the references therein. 

\begin{theorem}[LRMS with fixed rank]\label{thm:fixed-rank-signing}
Let $M\in\mathbb{R}_{\geq 0}^{m\times n}$ and let $r\geq 1$ be a fixed integer.
There is a deterministic algorithm running in time polynomial in $m$ and $n$ that decides whether there exists a sign matrix $S\in\{\pm1\}^{m\times n}$ with $\operatorname{rank}(S\circ M)\leq r$
and outputs the matrix $S$ if it exists.     
\end{theorem}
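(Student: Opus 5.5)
The plan is to reduce the search over all $2^{mn}$ sign matrices to a polynomial number of guesses of size depending only on $r$, using Lemma~\ref{lem:skeleton} and the linearization $|w^\top h| = M(i,j) \iff \langle ww^\top, hh^\top\rangle = M(i,j)^2$. The algorithm loops over $k = 0,1,\dots,r$; the case $k=0$ is trivial ($M=0$). Suppose a solution $Y=S\circ M$ of rank $k$ exists. By Lemma~\ref{lem:skeleton} there are $I,J$ with $|I|=|J|=k$ and $Y(I,J)$ nonsingular, and $Y = WH$ with $W=Y(:,J)$ and $H=Y(I,J)^{-1}Y(I,:)$.

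\textbf{First guess.} We guess $I$, $J$ and the signs of $Y(I,J)$, which is at most $m^k n^k 2^{k^2}$ choices, and discard guesses for which $Y(I,J)$ is singular. Once these are fixed, the problem splits cleanly:
\begin{itemize}
\item each row $w_i = Y(i,J)$ ranges over at most $2^k$ signings of $M(i,J)$;
\item each column $h_j = Y(I,J)^{-1}Y(I,j)$ ranges over at most $2^k$ candidates, one per signing of $M(I,j)$.
\end{itemize}
What remains is to choose row and column candidates simultaneously so that $|w_i^\top h_j| = M(i,j)$ for all $(i,j)$. This coupling is the main obstacle: choosing candidates independently row by row and column by column can fail, and enumerating them jointly is exponential.

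\textbf{Second guess.} To break the coupling, let $Q_i = w_iw_i^\top$ and $P_j = h_jh_j^\top$, which lie in the space of symmetric $k\times k$ matrices of dimension $N_k=\binom{k+1}{2}$. For the (unknown) true solution, pick a set $T$ of rows, $|T|\le N_k$, such that $\{Q_t\}_{t\in T}$ is a basis of $\operatorname{span}\{Q_i\}_{i\in[m]}$. We guess $T$ and the signings of the rows $w_t$, $t\in T$, which is at most $m^{N_k}2^{kN_k}$ choices. Then:
\begin{enumerate}
\item For each column $j$, pick any candidate $h_j$ among its $2^k$ options such that $|w_t^\top h_j| = M(t,j)$ for all $t\in T$. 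If no column candidate passes, reject the guess.
\item For each row $i$, pick any signing $w_i$ of $M(i,J)$ with $|w_i^\top h_j| = M(i,j)$ for all $j\in[n]$. If no row signing passes, reject the guess.
\item If both steps succeed, output $Y=WH$, which has rank at most $k$ and satisfies $|Y|=M$, so $S=\sign(Y)$ (with arbitrary signs on zero entries) is a valid signing.
\end{enumerate}

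\textbf{Correctness of the greedy steps.} For the correct guess, step 1 always succeeds, since the true $h_j$ passes the test. For step 2, fix a row $i$ and write its true $Q_i=\sum_{t\in T} c_tQ_t$. For any chosen column candidate, with $P_j=h_jh_j^\top$,
\[
\langle Q_i,P_j\rangle=\sum_{t\in T} c_t\langle Q_t,P_j\rangle=\sum_{t\in T} c_t M(t,j)^2=\sum_{t\in T} c_t\langle Q_t,P_j^{\rm true}\rangle=\langle Q_i,P_j^{\rm true}\rangle=M(i,j)^2 .
\]
The second equality holds because the chosen $h_j$ passed the test on $T$. Hence the true signing of row $i$ passes the test against the chosen columns, so step 2 succeeds. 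Conversely, any output is verified entrywise, so the algorithm never accepts a no-instance.

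\textbf{Running time.} The total count is
\[
\sum_{k\le r} m^{k+N_k}n^k2^{k^2+kN_k}\cdot \cO\big(2^k (nN_k+mn)\,k^2\big)
\]
arithmetic operations, which is polynomial in $m,n$ for fixed $r$. The delicate point in the write-up is the choice of $T$ as a basis of the span of the \emph{true} $Q_i$'s, which is only guessed, and the verification that greedy choices made afterwards cannot break feasibility; that is exactly the linear-dependence computation displayed above.
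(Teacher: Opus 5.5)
Your proposal is correct and follows essentially the same route as the paper. Both use the skeleton decomposition to guess $(I,J)$ and the signs of the nonsingular block, then guess at most $\binom{k+1}{2}$ rows whose $w_iw_i^\top$ span the rest, filter columns greedily against those rows and then rows against all columns, with the same linear-dependence computation justifying the arbitrary choices. Your deviations are cosmetic and affect neither correctness nor the polynomial bound: you allow all signings for columns in $J$, re-verify every row, and treat $k=0$ and zero entries of $\sign(Y)$ explicitly.
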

\begin{proof} 
For each $k=1,\ldots,r$, we need to check whether there exists a matrix of the form $Y = S \circ M$ with rank exactly $k$, where $S\in\{\pm1\}^{m\times n}$. 
If such a rank-$k$ matrix $Y$ exists, then it contains a non-singular $k\times k$ submatrix, say $Y(I,J)$, with $|I|=|J|=k$. Moreover, there exists a rank-$k$ factorization $(W,H)$ such that $Y= WH$ where $W = Y(:,J)$ and $H = Y(I,J)^{-1} Y(I,:)$; see Lemma~\ref{lem:skeleton}. 
The difficulty, of course, lies in choosing the right sign pattern $S$ to obtain $Y$. 
We will fix the signs of $Y(I,J)$ and show that, by using the linear dependencies, we do not need to enumerate all the signs of the other entries of $Y$, keeping our enumeration 
 polynomial in $m$ and $n$. Note that a brute-force enumeration would require $2^{mn}$ operations. The steps below describe our polynomial enumeration of sign patterns $S$ and how to construct a feasible solution $(W,H)$. It may be viewed as the exploration of a branching tree: each branch corresponds to a sequence of
choices for the signs of the entries of $Y := S \circ M$, and if a feasible solution $Y$ exists, then at least one branch will lead to the solution $Y$, otherwise no branch will lead to a feasible solution and the algorithm certifies that the problem is infeasible. 
The pseudocode is provided in Algorithm~\ref{alg:improvedfinal} which also includes several improvements described in the next section.

\begin{algorithm}[ht!] 
\caption{Solving low-rank matrix signing  (Theorems~\ref{thm:fixed-rank-signing} and \ref{thm:FPTgeneric})}
\label{alg:improvedfinal}
\begin{algorithmic}[1]

\Require A non-zero nonnegative matrix $M \in \mathbb{R}^{m \times n}_{\geq 0}$, and a rank $r$.

\Ensure If they exist, rank-$k$ matrices $W \in \mathbb{R}^{m \times k}$ and $H \in \mathbb{R}^{k \times n}$ for the smallest possible $k \leq r$ such that $M = |WH|$. 
Hence $\rank(S \circ M) = k \leq r$ for $S = \sign(WH)$. Otherwise, $W = [\,]$, $H = [\,]$.

\For{$k = 1,\dots,r$}  \label{StepK}

    \ForAll{$I \subset [m]$ and $J \subset [n]$ of size $k$} \Comment{Step 1}

        \State check\_full\_rank\_1 $\gets 1$ 
        \Comment{Improvement Step~1: check Condition~\eqref{eq:condStep1}}  \label{step:improvstep11}

        \ForAll{sign patterns $S(I,J) =: \tilde S \in \{\pm 1\}^{k \times k}$ with $\tilde S(1,:) = 1$, $\tilde S(:,1) = 1$ \label{setp:signpattern}}
        \label{setpsignS}

            \If{$\rank(\tilde S \circ M(I,J)) < k$}

                \State check\_full\_rank\_1 $\gets 0$ \Comment{Condition~\eqref{eq:condStep1} fails} \label{step:improvstep12}  

                \State Go to line~\ref{setp:signpattern} and try the next sign pattern. 

            \Else

                \State Construct the sets $R_i$, $i\in[m]$, and $C_j$, $j\in[n]$, as in~\eqref{eq:setsRi} and~\eqref{eq:setsCi}.

                \ForAll{subsets $P$ of $[m]$ of size $\min\big(m,\binom{k+1}{2}\big)$}
                \label{setpP}  \Comment{Steps 2 \& 3} 

                    \State check\_full\_rank\_2 $\gets 1$ \Comment{Improvement Step~3: check Condition~\eqref{eq:condStep3}}

                    \ForAll{$\{w_i \}_{i \in P}$ with $w_i \in R_i$} \Comment{$R_i$ defined in \eqref{eq:setsRi}} \label{setpChoiceWi}

                        \If{$\{w_i w_i^\top\}_{i \in P}$ is not a basis of $k\times k$ symmetric matrices}

                            \State check\_full\_rank\_2 $\gets 0$ \Comment{Condition~\eqref{eq:condStep3} fails}

                        \EndIf

                        \For{$j = 1,\dots,n$} \Comment{Step 4} 

                            \State Find any $h_j \in C_j$ from~\eqref{eq:setsCi} with $|w_i^\top h_j| = M(i,j)$ for all $i \in P$. 

                            \State If not possible,
                            go to line~\ref{setpChoiceWi} and try the next choice of $\{w_i \}_{i \in P}$.

                        \EndFor

                        \ForAll{$i \notin I\cup P$} \Comment{Step 5} 

                            \State Find any $w_i \in R_i$ such that $|w_i^\top h_j| = M(i,j)$ for all $j$.

                            \State If not possible, 
                            go to line~\ref{setpChoiceWi} and try the next choice of $\{w_i \}_{i \in P}$.
                             
                        \EndFor

                        \State Solution found:  \Return $[W,H]$.

                    \EndFor

                    \If{check\_full\_rank\_2 $= 1$} \Comment{Condition~\eqref{eq:condStep3} satisfied but no solution} 

                        \State Go to line~\ref{setp:signpattern} and try the next sign pattern. 

                    \EndIf

                \EndFor

            \EndIf

        \EndFor

        \If{check\_full\_rank\_1 $= 1$} \Comment{Condition~\eqref{eq:condStep1} satisfied but no solution} 

            \State Go to line~\ref{StepK} and try the next $k$. \label{step:improvstep13}  
        \EndIf

    \EndFor

\EndFor

\State \Return $W=[\,],H=[\,]$. 

\end{algorithmic}
\end{algorithm}

\medskip
\noindent\textbf{Step~1: Choosing a nonsingular basis block.}
Enumerate all pairs $(I,J)$ with $I\subseteq[m]$, $J\subseteq[n]$,
$|I|=|J|=k$, and all sign patterns
$S(I,J) \in\{\pm1\}^{k\times k}$ for $k \in [r]$.
Set $Y(I,J) = S(I,J) \circ M(I,J)$, but discard the branch if $Y(I,J)$ is singular. In summary, after this step, we have fixed the sign pattern of a $k\times k$ submatrix of $S$ indexed by $(I,J)$ such that $Y(I,J) = S(I,J) \circ M(I,J)$ is non singular. The number of branches at Step~1 is at most
$\binom{m}{k}\binom{n}{k}2^{k^2}\leq \mathcal{O}\bigl((mn)^r\,2^{r^2}\bigr)$. 

\medskip
\noindent\textbf{Step~2: Column and row candidates.}  
For a branch with a $k\times k$ nonsingular submatrix $Y(I,J)$ indexed by $(I,J)$, any compatible rank-$k$ matrix may use the columns indexed by $J$ as a basis of its column space, that is, $W = Y(:,J)$. 
Let us denote $w_i^\top \in \mathbb{R}^k$ the $i$th row of $W \in \mathbb{R}^{m \times k}$ for $i\in[m]$. 
We know $W(I,:) = Y(I,J)$  
since these signs have been fixed.   
For the other rows of $W$, we can still pick any signs for $Y(i,J)$ for $i \notin I$, with a total of $2^k$ possibilities for each row. In summary, for all $i\in[m]$, 
\begin{equation} \label{eq:setsRi}
w_i  \; \in  \;  R_i \;=\;
  \begin{cases}
    \bigl\{u\in\mathbb{R}^{k}:|u|=M(i,J)^\top \bigr\}  
      & \text{ for }i\notin I,\\[4pt]
    \{ Y(i,J)^\top \} &  
    \text{ for } i \in I. 
  \end{cases} 
\end{equation}
Similarly, for the $j$th column of $H \in \mathbb{R}^{k \times n}$, denoted $h_j$, 
we have, using the skeleton decomposition, 
\begin{equation} \label{eq:setsCi}
h_j \; \in   \; 
  C_j \; = \;
  \begin{cases}
    \bigl\{Y(I,J)^{-1} y : y\in\mathbb{R}^{k},\;|y|=M(I,j)\bigr\}  
      & \text{ for } j\notin J,\\[4pt]
    \{e_{\ell}\} & \text{ for } j \in J \text{ and $j$ is the $\ell$th entry in $J$}. 
  \end{cases}
\end{equation}
Note that $H(:,J) = I_k$, the $k\times k$ identity matrix. 

Suppose that a feasible rank-$k$ matrix $Y = S \circ M$ exists, where $S(I,J)$ was fixed (Step~1). By the skeleton decomposition, there must exist vectors $w_i\in R_i$ and $h_j \in C_j$ such that $Y(i,j) = w_i^\top h_j$ for all $i,j$, hence $|Y(i,j)| = M(i,j) = |w_i^\top h_j|$ for all $i,j$. 
Finding the candidates $w_i\in R_i$ and $h_j \in C_j$ such that $M(i,j) = |w_i^\top h_j|$ for all $(i,j)$ hence provides a certificate that a feasible solution is found (if such candidates do not exist, then one cannot play with the signs of $M$ to make it rank $k$). 
Unfortunately, checking all candidates does not lead to the desired results (namely, having an algorithm polynomial in $m$ and $n$), as for each of the remaining $m-k$ rows (resp.\ $n-k$ columns) of $S$, there are $(2^k)^{m-k}$ (resp.\ $(2^k)^{n-k}$) candidates. 
The next step will select a sufficiently small subset of the row candidates to achieve 
our goal.  

\medskip
\noindent\textbf{Step~3: Selecting row candidates.}  
For any feasible choice of row and column candidates $w_i \in R_i$ and $h_j \in C_j$ for all $(i,j)$, the feasibility condition $|w_i^\top h_j|=M(i,j)$ between row $i$ and column $j$ can be rewritten as 
\begin{equation}\label{eq:linearxxuu}
\langle h_j h_j^\top,\,w_i w_i^\top\rangle = M(i,j)^2.
\end{equation}
The matrices $w_iw_i^\top$ lie in the space of $k\times k$ symmetric matrices, whose dimension is $\binom{k+1}{2}$.
Hence, there exists
$P\subseteq[m]$ with $|P|\leq\binom{k+1}{2}$ such that
$w_\ell w_\ell^\top$  lies in the linear span of
$\{w_iw_i^\top:i\in P\}$ for all  $\ell\in[m]$. 
Note that if $m \leq \binom{k+1}{2}$, we can simply pick all rows and then keep only the corresponding linearly independent matrices. 
This linear dependence will allow us to filter the column candidates using only the rows indexed by $P$, and then to complete the remaining rows independently. 
In order to identify if a matrix $Y$ exists, we branch over all possible subsets~$P$ of size $\min\big(m, \binom{k+1}{2}\big)$ 
and all possible choices
of $w_i\in R_i$ for $i\in P$, creating at most
\[
  \binom{m}{\binom{k+1}{2}}(2^k)^{\binom{k+1}{2}}
  = m^{\cO(k^2)}\,2^{\cO(k^3)}
  \leq m^{\cO(r^2)}\,2^{\cO(r^3)}
\]
different new branches to explore. 
Note that we do not check whether $\{w_iw_i^\top\}_{i \in P}$ spans $\{w_iw_i^\top\}_{i \in [m]}$ for all possible $w_i \in R_i$ since this would be too expensive: what is important is that at least one such $P$ will be such that $\{w_iw_i^\top\}_{i \in P}$ spans $\{w_iw_i^\top\}_{i \in [m]}$. However, the algorithm might find a solution for a $P$ that does not satisfy this condition. 

\medskip
\noindent\textbf{Step~4: Filtering column candidates.} 
In step~3, we identified the candidates $w_i\in R_i$ for $i\in P$, each of which we can explore as a new branch (since there are sufficiently few, that is, a polynomial number in $m$). 
In other words, the subset of rows of $W$ in $P$ of size at most $\binom{k+1}{2}$ is fixed and the candidate $w_i$ is selected for all $i \in P$. 
We use them to check which column candidates for $H$ remain compatible with the candidate rows for $W$. 
For each $j\in[n]$, we retain only the candidates $h_j \in C_j$ satisfying
$(w_i^\top h_j)^2=M(i,j)^2$ for all $i\in P$.  
We discard the branch if no candidate $h_j \in C_j$ satisfies these conditions for some $j$, 
otherwise we select arbitrarily one surviving candidate per column.
The justification for keeping an arbitrary survivor is given in the next step.

\medskip
\noindent\textbf{Step~5: Selecting the remaining row candidates.} 
At the end of Step~4, one candidate $h_j\in C_j$ has been selected for every column $j\in[n]$.
It remains to select the row candidates $w_i\in R_i$ for $i\notin I\cup P$.
By construction, the selected $h_j$'s satisfy 
\[
\langle h_jh_j^\top,w_i w_i^\top\rangle=M(i,j)^2 \qquad\text{for all }i\in P.
\]
For each $i\notin I\cup P$, we select the vectors $w_i \in R_i$ satisfying  $$(w_i^\top h_j)^2=M(i,j)^2 \qquad\text{for every }j\in[n].$$
If no candidate survives for some $i\notin I\cup P$, we discard the branch.
Otherwise, we select arbitrarily one surviving candidate $w_i$. 

Let us show that the arbitrary choice of the $h_j$'s made in Step~4 are safe in a branch that leads to a feasible solution $Y$. 
Let $h_j^\ast$ be another admissible choice for column $j$ in Step~4, that is, this column coordinate $h_j^\ast$ satisfies  
\[
\langle h_j^\ast (h_j^\ast)^\top,w_i w_i^\top\rangle  = 
 M(i,j)^2 = \langle h_jh_j^\top,w_i w_i^\top\rangle \qquad\text{for all }i\in P.
\]
For $\ell\in[m]$, if $\{w_i w_i^\top\}_{i \in P}$ spans $\{w_i w_i^\top\}_{i \in [m]}$, there exist scalars 
$\alpha_i$ such that
$w_\ell w_\ell^\top
=\sum_{i\in P}\alpha_i\,w_i w_i^\top$.
Therefore,
\[
\begin{aligned}
\langle h_jh_j^\top,w_\ell w_\ell^\top\rangle  = \sum_{i\in P}\alpha_i\langle h_jh_j^\top,w_iw_i^\top\rangle 
 = \sum_{i\in P}\alpha_i M(i,j)^2  
 = \sum_{i\in P}\alpha_i\langle h_j^\ast (h_j^\ast)^\top,w_i w_i^\top\rangle 
 & = \langle h_j^\ast (h_j^\ast)^\top,w_\ell w_\ell^\top\rangle  \\ 
 & = M(\ell,j)^2.
\end{aligned}
\]
This means that the choice of any admissible $h_j$ in Step~4 does not affect the value of $\langle h_jh_j^\top, w_\ell w_\ell^\top\rangle$ for all $\ell \in [m]$, because of the linear dependence.  

\medskip
\noindent\textbf{Step~6: Constructing the factorization.} 
If there were no rank-$k$ feasible solution $Y = S \circ M$, then no branch generated in Steps~1-5 will remain, and we can conclude that the rank-$k$ problem is not feasible. 
Otherwise, our procedure provides a rank-$k$ factorization $(W,H)$ such that $M = |WH|$. 
In fact, let $W$ and $H$ contain the selected row and column candidates, respectively, that is, 
\[
W=\begin{pmatrix}w_1^\top\\\vdots\\w_m^\top\end{pmatrix}
  \in\mathbb{R}^{m\times k},
  \qquad
  H=
\begin{pmatrix}h_1&\cdots&h_n\end{pmatrix}
  \in\mathbb{R}^{k\times n}, \quad\text{ and }\quad S=\operatorname{sign}(WH)\in\{\pm 1\}^{m\times n}.
\]
By the filtering performed in the previous steps, the selected candidates satisfy $|W(i,:)\,H(:,j)|=|w_i^\top h_j|=M(i,j)$ for every
$i\in[m]$ and $j\in[n]$, so $\operatorname{rank}(S\circ M)=
\operatorname{rank}(WH)\le k\leq r$. 

\medskip
\noindent\textbf{Running time.}
For each $k\le r$, Step~1 produces $\cO\bigl((mn)^k\,2^{k^2}\bigr)$ branches and Step~3 produces $m^{\cO(k^2)}\,2^{\cO(k^3)}$ further branches.  Steps~4--5
cost $\operatorname{poly}(m,n)$ per branch.  The total is\footnote{Moreover, for rational inputs, if the bit-size of the entries of $M$ is at most $b$, for $r$ fixed, every rational number occurring in the computation has bit-size at most $\text{poly}_r(b)$ and so the algorithm is strongly polynomial over $\mathbb{Q}$.} 
$(mn)^r\cdot m^{\cO(r^2)}\cdot 2^{\cO(r^3)}\cdot\operatorname{poly}(m,n,r)$.
This concludes the proof. 

Note that the term $m^{\cO(r^2)}$ comes from the fact the we first explore the possible signs of the rows of $W$ in the sets $R_i$'s. If $n < m$, it is preferable to apply the algorithm on the transpose to have complexity $n^{\cO(r^2)}$. 
\end{proof}

\subsection{Improvements of the algorithm and FPT for generic matrices}   \label{sec:improvFPT}

In this section, we first provide two general improvements of the algorithm described in Theorem~\ref{thm:fixed-rank-signing}, one for Step~1 in Sections~\ref{sec:improvstep1} and one for Step~3 in Section~\ref{sec:improvstep3}. 
These allow us to provide, for LRMS, a fixed-parameter tractable (FPT) algorithm in the parameter $r$ for rank-$\rb$ generic matrices, where $\rb$ is a fixed positive integer, with probability one (w.p.1) in  Section~\ref{sec:FPTgeneric}.

\subsubsection{Improvements of Step~1} 
\label{sec:improvstep1}

We have the following result. 

\begin{lemma}
    Given $1\leq k\leq r$, let us consider Step~1 of the algorithm described in  Theorem~\ref{thm:fixed-rank-signing}. Assume we have found a pair of subsets $(I,J)$ with $|I|=|J|=k$ such that 
\begin{equation} \label{eq:condStep1} 
\rank\Big( M(I,J) \circ S(I,J) \Big) = k \quad  \text{ for all } \quad  S(I,J) \in \{\pm 1\}^{k \times k}. 
\end{equation} 
Then one does not need to consider other pairs of subsets $(I,J)$ to guarantee whether there exists $S$ such that $\rank(M \circ S) \leq k$. 
\end{lemma}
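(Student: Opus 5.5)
Under condition~\eqref{eq:condStep1}, every signed matrix $Y=S\circ M$ has a nonsingular $k\times k$ submatrix at $(I,J)$. So if a rank-$k$ signing exists at all, it can be reached from this pair, and enumerating all sign patterns of $S(I,J)$ for this fixed pair is exhaustive. The plan is to argue by contrapositive: if a feasible sign matrix $S$ with $\rank(S\circ M)\le k$ exists, then some branch rooted at this particular $(I,J)$, for some sign pattern $\tilde S$, reaches it. Hence, if all branches rooted at $(I,J)$ fail, no signing of rank at most $k$ exists, and trying other pairs $(I',J')$ is unnecessary.

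First I dispose of ranks below $k$. Suppose $S$ is any sign matrix. By~\eqref{eq:condStep1}, the submatrix $(S\circ M)(I,J)=S(I,J)\circ M(I,J)$ is nonsingular, so $\rank(S\circ M)\ge k$. Therefore $\rank(S\circ M)\le k$ forces $\rank(S\circ M)=k$ exactly, and any feasible $Y=S\circ M$ has rank $k$. (This also explains why Algorithm~\ref{alg:improvedfinal} may jump to the next $k$, since ranks below $k$ are then impossible; for the present statement we only need $\rank Y=k$.)

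Next, $Y(I,J)$ is a nonsingular $k\times k$ submatrix of the rank-$k$ matrix $Y$. By Lemma~\ref{lem:skeleton}, $Y=Y(:,J)\,Y(I,J)^{-1}Y(I,:)$, so $Y$ admits exactly the factorization with $w_i=Y(i,J)^\top\in R_i$ and $h_j\in C_j$ built in Step~2 for the pair $(I,J)$ with the sign pattern $S(I,J)$. The normalization $\tilde S(1,:)=\tilde S(:,1)=1$ used in Algorithm~\ref{alg:improvedfinal} is harmless. Flipping the signs of rows and columns of $Y$ preserves both the rank and $|Y|=M$. So I may replace $Y$ by $D_1YD_2$, with $D_1,D_2$ diagonal sign matrices chosen so that the first row and first column of the $(I,J)$ block become positive, and condition~\eqref{eq:condStep1} still holds for it.

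Finally, the correctness argument of Theorem~\ref{thm:fixed-rank-signing} (Steps 3--6) applies verbatim to this branch. Some $P$ spans $\{w_iw_i^\top\}$, Step~4's arbitrary survivor choice is safe, and Step~5 completes the rows, so the algorithm returns a solution from this $(I,J)$. I expect the only delicate point to be making clear that the existence argument never needed $(I,J)$ to be chosen depending on $Y$: condition~\eqref{eq:condStep1} guarantees that this single pair is a valid skeleton pivot for every candidate $Y$ simultaneously.
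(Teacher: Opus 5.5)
Your proposal is correct and follows the same argument as the paper: condition~\eqref{eq:condStep1} makes $(S\circ M)(I,J)$ invertible for every signing $S$, so any feasible $Y$ admits the skeleton decomposition pivoted at this one fixed pair $(I,J)$, and the branches rooted there are exhaustive. Your additional remarks are valid refinements that the paper's short proof leaves implicit: ranks below $k$ are then excluded, the normalization $\tilde S(1,:)=\tilde S(:,1)=1$ is without loss of generality, and Steps~3--6 apply unchanged.
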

\begin{proof}
The condition~\eqref{eq:condStep1} means that, regardless of its sign pattern, the submatrix $M(I,J)$ always has rank $k$, that is, we cannot reduce its rank by playing with its signs. This implies that, regardless of the sign pattern $S$, we can always construct a skeleton decomposition using the indices $(I,J)$ since $S(I,J) \circ M(I,J)$ is invertible for all sign patterns $S(I,J)$, and hence we do not need to explore other subsets of indices in the algorithm described in Theorem~\ref{thm:fixed-rank-signing} that construct a skeleton decomposition of $Y = M \circ S$.  
\end{proof}

As we will see in Section~\ref{sec:FPTgeneric}, condition~\eqref{eq:condStep1} is valid w.p.1 for generic matrices, and hence only one subset of indices $(I,J)$ needs to be checked in that case, leading to an FPT complexity in the parameter $r$ for Step~1. 

Another improvement for Step~1 (line \ref{setp:signpattern} in Algorithm \ref{alg:improvedfinal}) is that not all sign patterns need to be checked for $M(I,J)$. 
In fact, the rank of a matrix is unchanged under sign flips of its rows and columns, hence we can assume w.l.o.g.\ that the first row and first column of $S(I,J)$ have a fixed sign pattern, e.g., $\tilde S(1,:) = \tilde S (:,1) = 1$, where $\tilde S := S(I,J)$. 
This means that we only need to check $2^{(k-1)^2}$ sign patterns instead of $2^{k^2}$. 

\subsubsection{Improvement of Step~3} \label{sec:improvstep3} 

The other step that is not FPT in the parameter $r$ in Theorem~\ref{thm:fixed-rank-signing} is Step~3. Recall that, in Step~3, we need to check all subsets $P$ of size  $\min\big(m,\binom{k+1}{2}\big)$ of rows of $W$, for $1\leq k \leq r$. In a similar spirit as for Step~1, if for a subset $P$, $\{w_i w_i^\top\}_{i \in P}$ is a basis of the symmetric matrices for any possible choice of the signs (that is, for any $w_i \in R_i$), we do not need to consider other subsets. 
\begin{lemma}
    Given $1\leq k\leq r$, let us consider Step~3 of the algorithm of Theorem~\ref{thm:fixed-rank-signing}. 
    Assume that the subset $P$ of size $\min\big(m,\frac{k(k+1)}{2}\big)$ is such that  
    \begin{equation} \label{eq:condStep3}
         \{w_iw_i^\top\}_{i \in P} \text{ are linearly independent for all } w_i \in R_i.           
    \end{equation}   
Then, in the algorithm of Theorem~\ref{thm:fixed-rank-signing}, one does not need to consider other subsets $P$ in  
Step~3 for the current sign pattern $S(I,J)$.   
\end{lemma}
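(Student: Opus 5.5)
The argument mirrors the one for the Step~1 improvement: I will show that under condition~\eqref{eq:condStep3}, the subset $P$ already has the spanning property that Step~5 needs, in every branch that leads to a feasible solution.

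Fix $k$, the pair $(I,J)$ and the sign pattern $S(I,J)$, and suppose some feasible rank-$k$ matrix $Y=S\circ M$ extends this pattern. Lemma~\ref{lem:skeleton} gives vectors $w_i^\ast\in R_i$ and $h_j^\ast\in C_j$ with $Y(i,j)=(w_i^\ast)^\top h_j^\ast$, hence $|(w_i^\ast)^\top h_j^\ast|=M(i,j)$ for all $(i,j)$.

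First I handle the generic case $m\ge\binom{k+1}{2}$, so that $|P|=\binom{k+1}{2}$. Condition~\eqref{eq:condStep3} applied to the particular choice $\{w_i^\ast\}_{i\in P}$ shows that $\{w_i^\ast (w_i^\ast)^\top\}_{i\in P}$ consists of $\binom{k+1}{2}$ linearly independent symmetric $k\times k$ matrices. Since that space has dimension $\binom{k+1}{2}$, they form a basis. In particular they span $w_\ell^\ast(w_\ell^\ast)^\top$ for every $\ell\in[m]$. So the branch of Step~3 with this $P$ and the choice $w_i=w_i^\ast$ for $i\in P$ is exactly the situation assumed in the correctness argument of Steps~4--5.

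Next I rerun that argument inside this branch. In Step~4, $h_j^\ast$ survives the filter, so at least one candidate exists for each $j$. For any selected survivor $h_j$, the linear-dependence computation in Step~5 gives $\langle h_jh_j^\top, w_\ell^\ast(w_\ell^\ast)^\top\rangle=M(\ell,j)^2$ for all $\ell$. Hence for each $i\notin I\cup P$ the candidate $w_i^\ast$ survives Step~5, and the branch returns a valid factorization. The conclusion is that if a solution with this $S(I,J)$ exists, the branch for this $P$ finds it. If every choice of $\{w_i\}_{i\in P}$ fails for this $P$, then no solution exists for the current $S(I,J)$, and trying other subsets $P$ is unnecessary.

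The remaining point, which I expect to be the only delicate one, is the case $m<\binom{k+1}{2}$. There $P=[m]$ is the only subset of that size, so the statement is trivially true.
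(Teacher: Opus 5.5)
Your proof is correct and follows the paper's argument. Condition~\eqref{eq:condStep3}, applied to the rows $w_i^\ast$ of a feasible skeleton factorization, makes $\{w_i^\ast (w_i^\ast)^\top\}_{i\in P}$ a basis of the symmetric $k\times k$ matrices, and the Step~5 linear-dependence argument then shows that the branch for this $P$ succeeds whenever some solution extends $S(I,J)$. Compared with the paper, you spell out the dimension count, why $h_j^\ast$ survives Step~4 and $w_i^\ast$ survives Step~5, and the trivial case $m<\binom{k+1}{2}$, but the route is the same.
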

\begin{proof} Since $\{w_iw_i^\top\}_{i \in P}$ are linearly independent regardless of the choice of the $w_i$'s in $R_i$, all other rank-one matrices, $\{w_iw_i^\top\}_{i \notin P}$, will be in the span of $\{w_iw_i^\top\}_{i \in P}$. Hence, when looking for $h_j \in C_j$ in Step~5, the argument described in Step~5 allows us to only check the conditions  
$\langle h_jh_j^\top,w_i w_i^\top\rangle=M(i,j)^2$ for $i\in P$, as it will imply $\langle h_jh_j^\top,w_i w_i^\top\rangle=M(i,j)^2$ for $i\notin P$. 
\end{proof}

As we will see in Section~\ref{sec:FPTgeneric}, condition~\eqref{eq:condStep3} is valid w.p.1 for generic matrices, and hence only one set $P$ needs to be checked in that case, leading to an FPT complexity in the parameter $r$ for Step~3.

Algorithm~\ref{alg:improvedfinal} provides the pseudo-code of the algorithm described in Theorem~\ref{thm:fixed-rank-signing}, combined with the improvements in Steps~1 and~3 described in the two previous sections, which allows to stop the $k$th step early when \eqref{eq:condStep1} is satisfied, or avoid enumerating all subsets $P$ for a given sign pattern $S(I,J)$ when \eqref{eq:condStep3} is satisfied.

\subsubsection{Fixed-parameter tractable algorithm in the parameter \texorpdfstring{$r$}{r} for generic matrices} \label{sec:FPTgeneric}

For simplicity, we consider generic matrices of the following form: the matrix $Y$ is a generic rank-$\rb$  matrix if $Y = W_{\#} H_{\#}$, where the entries of $W_{\#} \in \mathbb{R}^{m \times \rb}$ and $H_{\#} \in \mathbb{R}^{\rb \times n}$ are drawn from a continuous distribution (e.g., Gaussian or uniform). 

Before proving that Algorithm~\ref{alg:improvedfinal} is fixed-parameter tractable (FPT) in the parameter $r$ for such generic rank-$\rb$  matrices (Theorem~\ref{thm:FPTgeneric}), we provide several lemmas. The first lemma shows that for a generic rank-$\rb$ matrix $Y$, the $k\times k$ submatrix $Y(I,J) \circ \tilde S$ is full rank w.p.1 for any sign pattern $\tilde S$ and any $1 \leq k \leq \rb$.    
\begin{lemma} \label{lem:genericlowranksign}
    \label{lem:generic_full_rank}
    Let $Y=W_{\#}H_{\#}$, where the entries of 
    $W_{\#} \in \R^{m\times \rb}$ and 
    $H_{\#} \in \mathbb{R}^{\rb \times n}$ are drawn from a continuous distribution. Then 
    \begin{equation}
        \label{eq:sumprobzero}
        \sum_{k=1}^\rb \sum_{\substack{S\in \{\pm 1\}^{m\times n} \\ I \subset [m],J\subset [n]: |I|=|J|=k}}
        \p\left(
        \det\big( 
        S(I,J)\circ Y(I,J)
        \big)=0
        \right) \quad = \quad 0,
    \end{equation}
    which means that for all $k$, for every choice of the matrix $S$ and of the sets $I$ and $J$ of size $k$, the submatrix $S(I,J)\circ Y(I,J)$ has full rank w.p.1. 
    \end{lemma}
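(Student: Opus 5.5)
The sum in \eqref{eq:sumprobzero} is a finite sum of nonnegative terms, so it suffices to show each single term vanishes: for fixed $k\le\rb$, fixed $I,J$ with $|I|=|J|=k$, and fixed sign matrix $S$, we have $\p(\det(S(I,J)\circ Y(I,J))=0)=0$. Only the sign block $\tilde S:=S(I,J)$ matters, and the number of terms is finite, so no union-bound subtlety arises.

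Fix such $k,I,J,\tilde S$ and view $f(W_\#,H_\#):=\det\big(\tilde S\circ (W_\#(I,:)H_\#(:,J))\big)$ as a polynomial in the $(m+n)\rb$ entries of $W_\#,H_\#$. Indeed, each entry $Y(i,j)=\sum_{t}W_\#(i,t)H_\#(t,j)$ is polynomial and the signs $\tilde S$ are constants, so $f$ is a polynomial. I will use the standard fact that the zero set of a nonzero polynomial on $\R^N$ has Lebesgue measure zero, hence probability zero under any distribution with a density, which is how I read ``continuous distribution'' (for instance i.i.d.\ entries with densities, so the joint law is absolutely continuous). Thus the whole proof reduces to exhibiting one point where $f\neq 0$.

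The only real step is this witness. Choose $W_\#(I,:)=\diag(\tilde S)$-free data as follows: let $W_\#(I,:)=[\,I_k\;\;\zero_{k\times(\rb-k)}\,]$ (after ordering $I$) and $H_\#(:,J)=\begin{pmatrix}D\\ \zero_{(\rb-k)\times k}\end{pmatrix}$, where $D$ is chosen so that $\tilde S\circ D$ is nonsingular. Take $D:=\tilde S\circ I_k$, i.e.\ the diagonal matrix with entries $\tilde S(\ell,\ell)$. Then $Y(I,J)=D$ and $\tilde S\circ D=\diag(\tilde S(\ell,\ell)^2)=I_k$, so $f=1\neq0$. Here $k\le\rb$ is used. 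All other entries can be set arbitrarily, for example to zero. Hence $f$ is a nonzero polynomial, its zero set is null, each term is $0$, and the sum is $0$.

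The main obstacle is mostly foundational rather than computational. It consists of making precise that ``drawn from a continuous distribution'' yields an absolutely continuous joint law, so that polynomial zero sets have probability zero. I would state this assumption explicitly and cite or sketch the measure-zero property of zero sets of nonzero polynomials, by induction on the number of variables using Fubini.
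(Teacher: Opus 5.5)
Your proposal is correct and follows the paper's proof essentially step for step: you reduce to a single $(k,S,I,J)$ because the sum is finite, view $\det(S(I,J)\circ Y(I,J))$ as a polynomial in the factor entries, and use $k\le\rb$ to exhibit a point where it is nonzero, so its zero set is null. Your witness $Y(I,J)=\tilde S\circ I_k$ (determinant $1$) is an immaterial variant of the paper's $Y(I,J)=I_k$ (determinant $\pm1$); you also make explicit the absolute-continuity assumption that the paper uses implicitly, and the garbled phrase ``$\diag(\tilde S)$-free data'' can simply be deleted.
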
 
    \begin{proof}
       Since the sum in \eqref{eq:sumprobzero} runs on a finite set, it is enough to show the claim for $k$, $S$, $I$ and $J$ fixed. The polynomial
       \[
        {f}(W(I,:),H(:,J)) = \det(S(I,J)\circ (W(I,:)H(:,J)))
       \]
       is non-zero since, for the choice $W(I,:)$ and $H(:,J)=$ so that $W(I,:)H(:,J)=I_k$ (possible since $k\leq \bar{r}$), it gives $\det(S(I,J)\circ I_k)=\prod_{i=1}^kS(I,J)_{ii}=\pm 1\neq 0$.
       Then the claim follows from the fact that the set of the zeros of ${f}$ has measure zero and that $W_{\#}$ and $H_{\#}$ are picked from a continuous distribution. 
    \end{proof}

The next two lemmas will be used to prove that $\{w_i w_i^\top\}_{i \in P}$ form a basis of the set of symmetric matrices w.p.1 for generic input matrices. 
The first one shows that invertible linear transformations of a continuous distribution remain a continuous distribution. 
\begin{lemma}\label{lem:ctsjtdistbn}
    If a set of vectors $\{w_1,\dots,w_\ell\} \in \R^d$ has an absolutely continuous joint distribution on $(\R^d)^\ell$, then for any $q\leq d$ and for any fixed matrices $B_1, \dots, B_\ell \in \R^{q \times d}$ with full row rank, $\{B_1w_1,\dots, B_\ell w_\ell\}$ also has an absolutely continuous joint distribution on $(\R^q)^\ell$.
    \begin{proof}
        This is a consequence of a much more general statement about composition of Lebesgue measurable functions. For more details and the general statement, refer to \cite[Theorem 2.47]{folland2013real}.
    \end{proof}
\end{lemma}

The second lemma provides the desired result regarding the span of $\{w_i w_i^\top \}_{i \in P}$. 
 
\begin{lemma}\label{lem:genericsymapp}
Let $Y = W_{\#}H_{\#}$, where the entries of $W_{\#} \in \R^{m\times \rb}$ and $H_{\#} \in \mathbb{R}^{\rb \times n}$ are drawn independently from a continuous distribution. The matrix $M = |Y|$ is given as input to  Algorithm~\ref{alg:improvedfinal}. For any $k \leq \rb$, for any subset $P \subseteq [m]$ of size $N_k = \binom{k+1}{2}$, the matrices $\{w_iw_i^\top\}_{i \in P}$ are linearly independent for all $w_i \in R_i$, and hence they span the entire space of $k \times k$ symmetric matrices w.p.1.  

  \begin{proof} 
Let $P $ be any subset of $[m]$ of size $N_k = \binom{k+1}{2}$. For the simplicity of exposition, without loss of generality, we assume $P = \{1,\dots,N_k\}$. 
Define the map
\[
\phi:\mathbb{R}^k \to \mathbb{R}^{N_k},
\qquad
\phi(w)=\operatorname{svec}(ww^\top),
\]
where $\operatorname{svec}$ denotes any vectorization of the independent entries of a symmetric matrix, that is,  the values in the upper triangular part.  
The matrices $\{w_iw_i^\top\}_{i=1}^{N_k}$ span the space of $k\times k$ symmetric matrices if and only if the vectors 
$\phi(w_1),\dots,\phi(w_{N_k})$ 
form a basis of $\mathbb{R}^{N_k}$, that is, if and only if the matrix
\[
A(w_1,\dots,w_n):=
\begin{bmatrix}
\phi(w_1)^\top\\
\vdots\\
\phi(w_{N_k})^\top
\end{bmatrix}
\in \mathbb{R}^{N_k\times N_k}
\]
is nonsingular. Since the entries of $\phi(w)$ are quadratic polynomials in the entries of $w$, the determinant
${f}(w_1,\dots,w_{N_k})=\det(A(w_1,\dots,w_n))$  
is a polynomial in the entries of $w_1,\dots,w_{N_k}$.   
Consider the following collection of $N_k$ vectors:
\[
\mathcal{V} = 
\Big\{\frac{e_i+e_j}{2}, \text{ } 1\le i\leq j\le k\Big\}.
\]
Then the polynomial ${f}$ evaluated on these $N_k$ vectors is non-zero and hence ${f}$ is not the zero polynomial.
Since the entries of $W_{\#}$ are picked independently at random from a continuous distribution, the vectors $\{W_{\#}(i,:)^\top\}_{i \in P}$ have a joint continuous distribution in $(\R^{k})^{N_k}$.  
Following the definition in \eqref{eq:setsRi}, for some $k \leq \rb$, the row candidates $w_i \in R_i$ are such that
 $w_i  = (W_{\#}(i,:)H_{\#}(:,J))^\top \circ s_i$ for some sign vector $s_i \in \{\pm 1\}^k$ where $J \subseteq [n]$, $|J| = k$. For a fixed collection of sign vectors $ \{s_{1},\dots,s_{N_k}\} =: \sigma \in (\{\pm 1\}^k)^{N_k}$, taking $D_i = \diag(s_i)$, we have 
\begin{equation}\label{eq:relwiwiinp}
w_i = D_iH_{\#}(:,J)^\top W_{\#}(i,:)^\top.    
\end{equation}
By Lemma \ref{lem:ctsjtdistbn} for $d = \rb$, $q = k$ and $\ell = N_k$, the corresponding candidate row vectors $w_1,\dots, w_{N_k}$  
have a joint continuous distribution.  Hence, using the fact that the set of zeros of $f$ has measure zero, 
\[
\mathbb{P} \Big(
f(w_1,\dots,w_{N_k}) = 0 
\; \Big| \;  
\rank((D_i(H_{\#}(:,J))^\top) = k \Big) = 0. 
\] 
Since the entries of $H_{\#}(:,J)$ are picked independently at random from a continuous distribution, it holds that $\rank(D_i(H_{\#}(:,J))^\top)=k$ w.p.1, similarly as in Lemma~\ref{lem:genericlowranksign}. Using the fact that there are a finite number of $\sigma$, using union bound, we can therefore conclude the matrices $\{w_iw_i^\top\}_{i \in P}$ are linearly independent and hence, they span the entire space of $k \times k$ symmetric matrices.   
  \end{proof}
\end{lemma}

We can now prove that Algorithm~\ref{alg:improvedfinal} is FPT in the parameter $r$ for generic input matrices. 

\begin{theorem} \label{thm:FPTgeneric}
Given $\rb\geq 1$, let $M = |Y|$ where $Y = W_{\#}H_{\#}$, such that the entries of $W_{\#} \in \R^{m\times \rb}$ and $H_{\#} \in \mathbb{R}^{\rb \times n}$ are drawn from a continuous distribution. Then Algorithm~\ref{alg:improvedfinal} run on $(M,r)$ for any $r$ is FPT in parameter $r$, as it runs in {$2^{\cO(\tilde r^3)}\cdot \cO(mn)$}
operations, where $\tilde r = \min(r, \rb)$.   
\end{theorem}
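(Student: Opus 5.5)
The plan is to trace Algorithm~\ref{alg:improvedfinal} on $(M,r)$ with $M=|Y|$, and show that with probability one each loop that could cost a polynomial factor in $m$ or $n$ stops after its first iteration. The argument splits according to the value of $k$ in the outer loop.

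First, for $k\le \tilde r$ and any $(I,J)$ with $|I|=|J|=k$, Lemma~\ref{lem:genericlowranksign} shows that w.p.1 every signing $\tilde S\circ M(I,J)$ is nonsingular. Since $M(I,J)=S_Y(I,J)\circ Y(I,J)$ with $S_Y=\sign(Y)$, every such signing is of the form $S(I,J)\circ Y(I,J)$. Hence condition~\eqref{eq:condStep1} holds for the very first pair $(I,J)$ enumerated. Only one pair is then explored, with $2^{(k-1)^2}$ sign patterns, and the flag check\_full\_rank\_1 stays equal to $1$.

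Second, for each such pattern, the first subset $P$ of size $N_k=\binom{k+1}{2}$ satisfies~\eqref{eq:condStep3} w.p.1 by Lemma~\ref{lem:genericsymapp}. This lemma applies to every $J$ and every sign choice, and there are finitely many events, so a union bound suffices. Hence only one $P$ is explored, with $(2^k)^{N_k}=2^{\cO(k^3)}$ choices of $\{w_i\}_{i\in P}$. In each branch, Step~4 tests, for each of the $n$ columns, at most $2^k$ candidates $h_j\in C_j$ against $N_k$ rows. Each test costs $\mathrm{poly}(k)$, after precomputing $Y(I,J)^{-1}$ once at cost $\cO(k^3)$. So Step~4 costs $n\,2^{\cO(k^2)}$. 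Step~5 tests, for each of the $m$ rows, $2^k$ candidates against all $n$ columns, which costs $\cO(mn)\,2^{k}\mathrm{poly}(k)$. Summing over $k\le\tilde r$, the number of branches is $2^{\cO(\tilde r^3)}$ and each costs $2^{\cO(\tilde r)}\mathrm{poly}(\tilde r)\,\cO(mn)$. This gives the bound $2^{\cO(\tilde r^3)}\cO(mn)$. The enumeration of the first $(I,J)$ and $P$ should be organized so that it does not pay a $\binom{m}{k}$ setup cost.

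Third, I handle termination. If $r\ge \rb$, then at $k=\rb$ a solution exists, namely $Y$ itself. Since $Y(I,J)$ is nonsingular, the skeleton decomposition of Lemma~\ref{lem:skeleton} uses this $(I,J)$, and the branch with $S(I,J)$ equal to the sign pattern of $Y(I,J)$ is explored. We may normalize its first row and column by flipping the rows and columns of $Y$ itself. Because~\eqref{eq:condStep3} holds, the argument of Step~5 guarantees the branch succeeds, so the algorithm returns by $k=\rb$. If $r<\rb$, the loop runs only up to $k=r=\tilde r$. In both cases only $k\le\tilde r$ is visited, and the bounds above apply.

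The main obstacle is making the ``w.p.1'' claims uniform. Lemma~\ref{lem:genericsymapp} must hold simultaneously for all $k\le\tilde r$, all $J$, and all sign vectors. It must also hold for the possibly non-generic case $m<N_k$, where I would instead take $P=[m]$ and argue that linear independence still holds generically. Since all of these are finitely many events, a union bound handles them. I would also double-check that the row candidates for $i\in I$, which are fixed rather than sign-free, are covered by the same genericity argument.
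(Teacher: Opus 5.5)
Your proposal is correct and follows the paper's proof. You restrict to $k\le\tilde r$ via termination at $k=\rb$, invoke Lemmas~\ref{lem:genericlowranksign} and~\ref{lem:genericsymapp} so that w.p.1 only one pair $(I,J)$ and one subset $P$ are entered, and count $2^{\cO(k^3)}$ choices of $\{w_i\}_{i\in P}$, each costing $2^{\cO(k)}\mathrm{poly}(k)\cdot\cO(mn)$ in Steps~4--5. Your extra details fill in points the paper leaves implicit: signings of $M(I,J)=|Y(I,J)|$ are signings of $Y(I,J)$, the Step~5 span argument justifies success at $k=\rb$, and the setup cost is handled. Your worry about $m<N_k$ is moot, since then $P=[m]$ is the only subset enumerated.
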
 
\begin{proof} 
If $r \geq \rb$, Algorithm~\ref{alg:improvedfinal} will terminate for $k\leq \rb$ since a feasible solution of rank $\rb$ exists, otherwise the algorithm will stop for $k\leq r\leq \rb$.
In all cases, $k$ will not exceed $\tilde{r}$. 

Now, for any $k \leq \tilde{r}$, the two conditions~\eqref{eq:condStep1} and~\eqref{eq:condStep3}  are satisfied  w.p.1 for any $(I,J)$ and any $P$, as shown in Lemmas~\ref{lem:genericlowranksign} and~\ref{lem:genericsymapp}. 
Hence, Algorithm~\ref{alg:improvedfinal} will only enter once the loop over the indices $(I,J)$ and over the subset $P$, w.p.1. 
In the loop over $P$, it might still need to check all possible $w_i \in R_i$ for $i \in P$, 
with at most $2^{k^2(k+1)/2}$ different choices -- there are $k(k+1)/2$ vectors $w_i \in \mathbb{R}^k$ whose entries can take two signs (unless $i \in I$); see~\eqref{eq:setsRi}.   
Then 
\begin{itemize}
    \item Checking linear independence of $\{w_iw_i^\top\}_{i \in P}$ takes $\cO(k^6)$. 

    \item Finding $h_j \in C_j$ such that $|w_i^\top h_j| = M(i,j)$ takes up to $2^k k \binom{k+1}{2}$ since $C_j$ has up to $2^k$ elements, the inner product costs $k$ operations and there are $\binom{k+1}{2}$ vectors $w_i$. 
    This has to be done $n$ times. 

    \item Finding $w_i \in R_i$ for $i \notin P$ such that $|w_i^\top h_j| = M(i,j)$ for all $j$ takes up to $2^k k n$ which has to be done $m$ times. 
\end{itemize} 
Since when $n>m$ it is always possible to consider $Y^\top$, the total cost is  
\[
\mathcal{O} \left( \tilde r 
2^{\tilde r^3} ( \tilde r^6 + {\tilde r}^3 2^{\tilde r}  \min(m,n) + \tilde r 2^{\tilde r}   m n )
\right). 
\]
\end{proof}

We conclude this section with four remarks. 

\begin{remark}[Fixed-parameter linear] 
For generic matrices, Algorithm~\ref{alg:improvedfinal} is fixed-parameter linear in $mn$, the number of matrix entries, with parameter $r$.  
\end{remark}

\begin{remark}[Generic matrices]
 In this section, we considered rank-$r$ generic matrices as the product of two smaller generic matrices. 
However, one could instead consider the more general case where the matrices are drawn from the manifold of rank-$r$ matrices $\mathcal{M}_r$. Although this space has zero ambient Lebesgue measure in $\mathbb R^{m\times n}$ whenever $r<\min\{m,n\}$, one can alternatively consider the restricted Hausdorff measure and then use that to define the notion of absolutely continuous probability measures on $\mathcal{M}_r$~\cite{Federer_1996}.  The proofs in this section generalize to that setting as well, but we use the current setting for the simplicity of the exposition.  
\end{remark}

\begin{remark}[Last improvement of Algorithm~\ref{alg:improvedfinal}]
A drawback of Algorithm~\ref{alg:improvedfinal} is that it tries to find a rank-$k$ solution, incrementing $k$ from 1 to $r$. Hence if the input matrix has a rank-$r$ solution, but no smaller rank solution, it will first have to enumerate all $k=1,2,\dots,r-1$ before finishing, 
which will be slow. 
To avoid this, one could start the iteration at $k=r$, but then we do not
 necessarily have an FPT algorithm for generic matrices. Hence, in practice, it could actually be better to run two algorithms in parallel: one that increments $k$ from 1, and the other that decrements from $r$. 
\end{remark}

\begin{remark}[Domain of $p$] We could extend our analysis and relax the constraint on $p$ being integer. In particular, whenever the $\big(\frac{1}{p}\big)$th power is an available operation, we can consider $p$ to be a positive real number and, if $X$ has strictly positive entries, we can even allow $p<0$. For the sake of a clearer presentation, we did not consider this more general setting. 
\end{remark}

\subsubsection{Implementation and numerical examples} \label{sec:implementnumexp}

We have implemented Algorithm~\ref{alg:improvedfinal} in MATLAB. It is available from 
\begin{center}
  \url{https://gitlab.com/ngillis/rank-r_signing/}.  
\end{center} 
To give an idea of the size of matrices one can handle on a laptop\footnote{Experiments performed with a 
12th Gen Intel(R) Core(TM) i9-12900H  2.50 GHz, 32GB RAM, \textsc{Matlab} R2019b.}, 
Table~\ref{tab:runtime} reports the computational time needed for the algorithm to generate a feasible solution on rank-$r$ randomly generated matrices. 
When there exists a feasible solution for $r \in \{2,3\}$, our algorithm can handle matrices of large dimensions, up to $10^4$, within seconds. 
For $r=4$, the algorithm starts to stall. In fact, the largest term in the computational cost in Step~3, 
$2^{(r-1)^3}$, goes from 256 for $r=3$ to 
134,217,728 for $r=4$, so this quick change of behavior is expected. 

\begin{table}[ht]
\centering 
\begin{minipage}{0.24\textwidth}
\centering
\begin{tabular}{|c|c|}
\hline
\multicolumn{2}{|c|}{$r=2$} \\ \hline
$m=n$ & time (s.) \\ \hline
100  & $0.005 \pm 0.008$ \\ 
500  & $0.02 \pm 0.03$ \\ 
1000 & $0.07 \pm 0.04$ \\ 
10000 & $8.57 \pm 0.1$ \\ 
\hline
\end{tabular}
\end{minipage}
\hfill
\begin{minipage}{0.24\textwidth}
\centering
\begin{tabular}{|c|c|}
\hline
\multicolumn{2}{|c|}{$r=3$} \\ \hline
$m=n$ & time (s.) \\ \hline
100  & $0.03 \pm 0.02$ \\ 
500  & $0.08 \pm 0.03$ \\ 
1000 & $0.18 \pm 0.08$ \\ 
10000 & $9.94 \pm 0.6$ \\
\hline
\end{tabular}
\end{minipage}
\hfill
\begin{minipage}{0.22\textwidth} 
\centering
\begin{tabular}{|c|c|}
\hline
\multicolumn{2}{|c|}{$r=4$} \\ \hline
$m=n$ & time (s.) \\ \hline
5 & $0.04 \pm 0.03$ \\ 
7 & $1.80 \pm 1.06$ \\ 
9 & $118 \pm 68.5$ \\ 
10 & $> 600$ \\ 
\hline
\end{tabular}
\end{minipage}
\hfill
\begin{minipage}{0.22\textwidth} 
\centering
\begin{tabular}{|c|c|}
\hline
\multicolumn{2}{|c|}{$r=5$} \\ \hline
$m=n$ & time (s.) \\ \hline
6 & $11.4  \pm   6.4$ \\ 
7 & $178 \pm    104$ \\ 
8 & $> 600$ \\ 
\hline
\end{tabular}
\end{minipage}
\caption{Average run time in seconds, and standard deviation, to certify feasibility among 20 runs on rank-$r$ randomly generated matrices
$M = |\texttt{randn}(m,r)*\texttt{randn}(r,n)|$. \label{tab:runtime}}
\end{table}

Table~\ref{tab:runtime2} reports experiments for full-rank generic matrices, to see how long the algorithm takes to certify that no rank-$k$ solution exist for $k \leq r$. 
\begin{table}[ht]
\centering 
\begin{minipage}{0.24\textwidth}
\centering
\begin{tabular}{|c|c|}
\hline
\multicolumn{2}{|c|}{$r=2$} \\ \hline
$m=n$ & time (s.) \\ \hline
100  & $0.002 \pm 0.002$ \\ 
500  & $0.007 \pm 0.0004$ \\ 
1000 & $0.014 \pm 0.0007$ \\ 
10000 & $0.14 \pm 0.005$ \\ 
\hline
\end{tabular}
\end{minipage}
\hspace{0.3cm} 
\begin{minipage}{0.24\textwidth}
\centering
\begin{tabular}{|c|c|}
\hline
\multicolumn{2}{|c|}{$r=3$} \\ \hline
$m=n$ & time (s.) \\ \hline
100  & $0.04 \pm 0.001$ \\ 
500  & $0.13 \pm 0.003$ \\ 
1000 & $0.24 \pm 0.008$ \\ 
10000 & $2.21 \pm 0.05$ \\
\hline
\end{tabular}
\end{minipage}
\hspace{0.05cm}  
\begin{minipage}{0.22\textwidth} 
\centering
\begin{tabular}{|c|c|}
\hline
\multicolumn{2}{|c|}{$r=4$} \\ \hline
$m=n$ & time (s.) \\ \hline
5 & $0.07 \pm 0.003$ \\ 
7 & $3.11 \pm 0.02$ \\ 
9 & $217 \pm 0.44$ \\ 
10 & $> 600$ \\ 
\hline
\end{tabular}
\end{minipage}
\hspace{0.2cm}  
\begin{minipage}{0.22\textwidth} 
\centering
\begin{tabular}{|c|c|}
\hline
\multicolumn{2}{|c|}{$r=5$} \\ \hline
$m=n$ & time (s.) \\ \hline
6 & $21.1  \pm  0.08 $ \\ 
7 & $286  \pm  0.72$ \\ 
8 & $> 600$ \\ 
\hline
\end{tabular}
\end{minipage}
\caption{Average run time in seconds, and standard deviation, to certify infeasibility among 20 runs on full-rank randomly generated matrices
$M = |\texttt{randn}(m,n)|$. \label{tab:runtime2}}
\end{table}

We observe two trends: 
\begin{itemize}
    \item 
    Algorithm~\ref{alg:improvedfinal} runs faster on full-rank matrices for $r \in \{2,3\}$, especially as the dimension increases.  For example, it takes Algorithm~\ref{alg:improvedfinal} about 8.5 seconds to process $10^4\times10^4$ generic matrices of rank 2, while 0.14 seconds to certify no rank-2 solution exist for full rank generic matrices of the same size. The reason is that when a feasible solution exists in the low-rank case, it has to construct it and go through all $m$ rows of $W$ and $n$ columns of $H$. 
    In the full-rank generic case, even an $(r+1)\times(r+1)$ submatrix will not have a feasible rank-$r$ solution, and hence the runtime of the algorithm is almost independent of $(m,n)$. It will not need to explore more than $r+1$ rows and columns of $W$ and $H$ to decide for infeasibility.  

    \item For more difficult instances with $r > 3$ and  $m,n$ close to $r$, 
    the full-rank matrices take typically more time to process. Algorithm~\ref{alg:improvedfinal} has to try all sign patterns for a given $(I,J)$ to make sure there is no feasible solution. It compensates for the fact that it may exit the loop over the set $P$ earlier as observed above. 
    While for low-rank matrices, it will stop as soon as a feasible solution is found. 
    This also explains why the standard deviation is close to zero for full-rank matrices: all sign patterns have to be checked since none are feasible. On the contrary, for low-rank matrices, there is randomness on how many patterns have to be tried before finding a feasible one. For example, for $r=5$, $m=n=7$, the shortest runtime for low-rank matrices over the 20 runs is 3.5 s and the longest is 315 s, while there are all about 286 s in the full-rank case. 
\end{itemize}

\subsection{Complexity of ExactEPMF} 
\label{sec:complexexactEPMF}

We can now prove that solving \textsc{ExactEPMF} for $r$ fixed can be done in polynomial time, while, for the generic case when $r$ is part of the input, the algorithm is FPT in the parameter $r$. 

\begin{corollary}\label{cor:abs-p-root-rank}
Let {$p$ be a positive integer},
$X\in\mathbb R_{\geq0}^{m\times n}$, and $r\geq 1$ be fixed. There is a deterministic algorithm running in time polynomial in $m$ and $n$ that decides whether there exist matrices
$W\in\mathbb R^{m \times r}$ and $H\in\mathbb R^{r \times n}$ such that $X=|WH|^{\circ p}$, and outputs a solution if such a pair $(W,H)$ exists. 
Moreover, when $r$ is a part of the input, the algorithm is FPT in $r$ for generic input matrices. 
\end{corollary}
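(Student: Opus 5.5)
The plan is to reduce the corollary directly to Theorem~\ref{thm:fixed-rank-signing} and Theorem~\ref{thm:FPTgeneric} through the equivalence between \textsc{ExactEPMF} and LRMS established in Section~\ref{sec:strongNPhardness}.

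Given $X\ge 0$, compute $M := X^{\circ 1/p}$ entrywise. Since $p$ is a fixed positive integer, this is a single entrywise operation costing $\cO(mn)$. Over the reals, or over an exact model that supports $p$th roots, it is available; if one insists on rational arithmetic, the algorithm can instead work with the squared constraints $\langle h_jh_j^\top, w_iw_i^\top\rangle = M(i,j)^2$, as the paper already does.

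Next, observe that $X = |WH|^{\circ p}$ for some $W \in \R^{m\times r}$, $H\in\R^{r\times n}$ if and only if $M = |WH|$. This holds because $t\mapsto t^{1/p}$ is a bijection on $\R_{\ge 0}$. The latter condition is in turn equivalent to the existence of $S\in\{\pm1\}^{m\times n}$ with $\rank(S\circ M)\le r$: take $S=\sign(WH)$, with an arbitrary sign where an entry is zero. Conversely, any factorization of $S\circ M$ of rank at most $r$, padded with zero columns and rows if its rank $k$ is smaller than $r$, gives $X=|WH|^{\circ p}$.

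For fixed $r$, run Algorithm~\ref{alg:improvedfinal} on $(M,r)$. By Theorem~\ref{thm:fixed-rank-signing}, it runs in time polynomial in $m,n$ and either returns $W\in\R^{m\times k}$, $H\in\R^{k\times n}$ with $M=|WH|$ and $k\le r$, or certifies infeasibility. Pad $W$ with $r-k$ zero columns and $H$ with $r-k$ zero rows to obtain the required $m\times r$ and $r\times n$ factors.

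For the FPT claim, interpret ``generic input'' as $X = |W_\# H_\#|^{\circ p}$ with $W_\#, H_\#$ drawn from a continuous distribution. Then $M = |W_\# H_\#|$ is exactly the input model of Theorem~\ref{thm:FPTgeneric}, which gives $2^{\cO(\tilde r^3)}\cO(mn)$ operations with probability one. The extra $\cO(mn)$ cost of computing $M$ and padding does not change this bound.

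The only point needing care, and the main (mild) obstacle, is the arithmetic model for the $p$th root. I would state the result in the real RAM model used in Theorem~\ref{thm:fixed-rank-signing}, and note that all the algorithm's tests only involve the entries $M(i,j)^2$ and products of the sign-adjusted entries. When $X$ is a perfect $p$th power of rationals, as in the reduction of Section~\ref{sec:strongNPhardness}, everything stays exact.
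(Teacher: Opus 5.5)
Your proposal is correct and follows the paper's proof: set $M=X^{\circ 1/p}$ (the paper likewise assumes an exact $p$th-root oracle on top of the BSS model), observe that \textsc{ExactEPMF} on $X$ is exactly LRMS on $M$, and invoke Theorems~\ref{thm:fixed-rank-signing} and~\ref{thm:FPTgeneric}; your zero-padding from $k$ to $r$ only makes explicit what the paper leaves implicit. One small correction to your aside: the squared constraints alone do not let Algorithm~\ref{alg:improvedfinal} bypass the $p$th root, because it also uses $M$ itself to form $Y(I,J)=\tilde S\circ M(I,J)$, its inverse, and the candidate sets $R_i$ and $C_j$ (for $p=2$, for instance, $M(i,j)^2=X(i,j)$ is rational while $M(i,j)$ need not be), but this does not affect your argument since you ultimately work in the real model with $p$th roots available.
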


\begin{proof} 
Set\footnote{Except for this step, the rest of the algorithm can be performed in the Blum-Shub-Smale (BSS) model of computation~\cite{BSS89,BCSS98}. To accommodate this step, we require oracle access to exact $p$th root computation for a fixed $p$.} $M=X^{\circ 1/p}$. There exist $W\in\mathbb R^{m\times k}$ and $H\in\mathbb R^{k\times n}$ for some $k\leq r$ such that $X=|WH|^{\circ p}$ if and only if there exists a matrix $Y\in\mathbb R^{m\times n}$ such that
\[
|Y|=M
\qquad\text{and}\qquad
\operatorname{rank}(Y)\leq r.
\]
Indeed, if $Y=WH$, then $|Y|^{\circ p}=X$.
In the other way, if such a matrix $Y$ exists, take any rank-$k$ factorization $Y=WH$ with $W\in\mathbb R^{m\times k}$ and $H\in\mathbb R^{k\times n}$.
Thus, this problem is exactly LRMS for the positive matrix
$M=X^{\circ 1/p}$, and the result follows from Theorems~\ref{thm:fixed-rank-signing} and~\ref{thm:FPTgeneric}.
\end{proof}


The next section shows that this tractability of EPMF is specific to the exact case: once one considers the Frobenius-norm approximation problem, the fixed-rank problem becomes NP-hard already for $r=2$.

\section{{Strong} NP-hardness of FroEPMF with \texorpdfstring{$r$}{r} fixed}\label{sec:fro-fixed-rank}


We now turn to the Frobenius-norm version of EPMF in the fixed-rank regime. 
Given a nonnegative matrix $X\in\mathbb R_{\geq 0}^{m\times n}$ and an integer $r$, the Frobenius absolute $p$th power factorization problem is 
\[
\min_{W\in\mathbb{R}^{m\times r}, H\in\mathbb{R}^{r\times n}}\left\|X-|WH|^{\circ p}\right\|_F^2.
\]

\paragraph{The rank-one case.}
As for NMF, the rank-one case is elementary and can be solved in polynomial time {in the BSS model, given oracle access to exact $p$th root computation for a fixed $p$}.
Indeed, when $r=1$, writing $W=w\in\mathbb R^m$ and $H^\top=h\in\mathbb R^{n}$, we have
\[
|WH|^{\circ p}_{ij} = |w_i h_j|^p = |w_i|^p |h_j|^p.
\]
Thus, by setting $a_i=|w_i|^p$ and $b_j=|h_j|^p$, the rank-one \textsc{FroEPMF} problem is equivalent to
\[
\min_{a\in\mathbb R_+^m,\ b\in\mathbb R_+^n} \|X-ab^\top\|_F^2 .
\]
Since $X\geq 0$, the best unconstrained rank-one approximation of $X$ in Frobenius norm admits a nonnegative solution. 
The rank-one \textsc{FroEPMF} problem can be solved in polynomial time using the truncated SVD: let $(u,\sigma,v)$ be the first singular triplets of $X$, then $a = \sigma |u|$ and $b = |v|$ is an optimal nonnegative solution, because 
$(X(i,j) -  \sigma u_i v_j)^2 \geq 
(X(i,j) -  \sigma |u_i| |v_j|)^2$ 
since $X \geq 0$. 
Note that it could happen that $a_i = 0$ for some $i$, e.g., when $X(i,:) = 0$ (and similarly for $b_j$).  

\paragraph{The rank-two case.} 
The main result of this section shows that the tractability of the rank-one case disappears immediately when $r=2$.
Our proof proceeds by a polynomial-time reduction from the decision version of the NP-hard \textsc{Cut-Norm} problem for sign matrices~\cite{gillis2018complexity}. 
We begin by formally stating the decision versions of both problems.

\noindent
\fbox{\begin{minipage}{0.99\textwidth}\textsc{Cut-Norm.}

\noindent
\emph{Given:} a matrix $M\in\{\pm 1\}^{s\times t}$, and $D\in\mathbb{N}$.

\noindent
\emph{Question:} does there exist $(u,v)\in\{\pm 1\}^s\times\{\pm 1\}^t$ such that $\langle M,uv^\top\rangle \geq D$?
\end{minipage}}\\

\medskip
\noindent
\fbox{\begin{minipage}{0.99\textwidth}\textsc{FroEPMF.} 

\noindent
\emph{Given:} a matrix $X\in\mathbb{R}_{\geq 0}^{m\times n}$ and $T>0$.

\noindent
\emph{Question:} for fixed $p,r\geq 1$, does there exist $(W,H)\in\mathbb{R}^{m\times r}\times\mathbb{R}^{r\times n}$ such that $\|X-|WH|^{\circ p}\|_F^2\leq T$?
\end{minipage}}\\

\textsc{Cut-Norm} is strongly NP-hard due to a strongly polynomial-time reduction from the \textsc{Max-Cut} problem~\cite[Theorem~1]{gillis2018complexity}. 

\paragraph{Reduction.}Let us now reduce \textsc{Cut-Norm} to \textsc{FroEPMF} with $r=2$ in polynomial time.
We may assume $D\leq st$, since otherwise the \textsc{Cut-Norm} instance is trivially a no-instance and can be mapped to any fixed no-instance of rank-2 \textsc{FroEPMF}.

Given an instance $(M,D)$ of \textsc{Cut-Norm}, we construct an instance $(X,T)$ of rank-2 \textsc{FroEPMF} as follows: 
\begin{itemize}
    \item $N=\left\lceil\max\left\{(24p)^2(st)^3,(24p)^p(st)^{p+\frac{1}{2}}
        \right\}\right\rceil$, 
    \item $T=\frac{st-D}{2}$, 
    \item $m=s+2N$ and $n=t+2N$, 
    \item  $X\in\{0,1,2^p\}^{m\times n}$ is defined as follows:
\begin{equation}\label{eq:reductionFroEPMF}
X=
\begin{pmatrix}
B & \mathbf 1_{s\times N} & \mathbf 1_{s\times N}\\[1mm]
\mathbf 1_{N\times t} &  2^p \, \mathbf 1_{N\times N} & \mathbf 0_{N\times N}\\[1mm]
\mathbf 1_{N\times t} & \mathbf 0_{N\times N} &  2^p \, \mathbf 1_{N\times N}
\end{pmatrix},    
\end{equation}

with 
$$B=\frac{M+ \mathbf 1_{s\times t}}{2}\in\{0,1\}^{s\times t}. $$
\end{itemize}

\begin{theorem}\label{thm:red_CUT_to_R2FroEPMF}
The instance $(X,T)$ is a yes-instance of rank-2 \textsc{FroEPMF} if and only if $(M,D)$ is a yes-instance of \textsc{Cut-Norm}. 
Hence, rank-2 \textsc{FroEPMF} is NP-hard. 
\end{theorem}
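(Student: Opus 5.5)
The proof has two directions. The \emph{if} direction exhibits an explicit rank-$2$ factorization that encodes a cut. The \emph{only if} direction is a stability argument: the two large $2^p$-blocks, of size $N\times N$, pin any near-optimal rank-$2$ solution close to that explicit structure. Throughout I use the parity fact that, for $(u,v)\in\{\pm1\}^s\times\{\pm1\}^t$, the quantity $\langle M,uv^\top\rangle$ is congruent to $st$ modulo $2$. Hence $\tfrac{st-\langle M,uv^\top\rangle}{2}$ is an integer counting the entries where $M_{ij}\neq u_iv_j$. In particular, if no pair $(u,v)$ achieves $\langle M,uv^\top\rangle\ge D$, this count is at least $T+\tfrac12$ for every $(u,v)$.

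\textbf{If part.} Given $(u,v)$ with $\langle M,uv^\top\rangle\ge D$, I set the rows of $W$ to be $w_i=\tfrac12(1,u_i)$ for $i\in[s]$, $(1,0)$ on the second row block and $(0,1)$ on the third. I set the columns of $H$ to be $h_j=(1,v_j)^\top$ for $j\in[t]$, $(2,0)^\top$ on the second column block and $(0,2)^\top$ on the third. A direct check gives exact fit on all blocks except the top-left one: the diagonal blocks give $2^p$, the off-diagonal blocks give $0$, and the border blocks give $|\tfrac12\cdot 2|=|\pm1|=1$. In the top-left block $|w_i^\top h_j|^p=\bigl(\tfrac{1+u_iv_j}{2}\bigr)^p\in\{0,1\}$. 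This entry equals $B_{ij}$ exactly when $M_{ij}=u_iv_j$, and otherwise the squared error is $1$. The total error is therefore $\tfrac{st-\langle M,uv^\top\rangle}{2}\le T$.

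\textbf{Only if part.} Suppose $\|X-|WH|^{\circ p}\|_F^2\le T<st$.
\begin{enumerate}
\item By averaging over the $N$ rows of each large row block and the $N$ columns of each large column block, pick rows $r_2,r_3$ and columns $c_2,c_3$, one in each large block, whose total squared error is at most $st/N$. Every entry in these rows and columns is then within $\epsilon:=\sqrt{st/N}$ of its target.
\item From $|w_{r_2}^\top h_{c_2}|^p\approx2^p$, $|w_{r_2}^\top h_{c_3}|^p\approx0$, and the symmetric relations for $r_3$, the vectors $h_{c_2},h_{c_3}$ are linearly independent. Since $|WH|$ is invariant under $(W,H)\mapsto(WG,G^{-1}H)$, I apply such a change of basis with $G\in GL_2$ to normalize $h_{c_2}=(2,0)^\top$ and $h_{c_3}=(0,2)^\top$ exactly.
\item Taking $p$th roots, the entries of row $i\in[s]$ in columns $c_2,c_3$ give $|2w_{i,1}|\approx 1$ and $|2w_{i,2}|\approx1$. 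So $w_i\approx\tfrac12(\pm1,\pm1)$, i.e.\ up to a global sign $w_i\approx\tfrac12(1,u_i)$ with $u_i\in\{\pm1\}$.
\item Symmetrically, using rows $r_2,r_3$, which by the normalization are close to $(1,0)$ and $(0,1)$ up to sign, I get $h_j\approx\pm(1,v_j)^\top$ for $j\in[t]$.
\item With every perturbation of order $\delta=O(p\epsilon)$, each top-left entry satisfies $\bigl||w_i^\top h_j|^p-\tfrac{1+u_iv_j}{2}\bigr|\le c\,p\,\delta$. The top-left error is then at least $\tfrac{st-\langle M,uv^\top\rangle}{2}-O(st\,p\,\delta)$.
\end{enumerate}
The choice $N\ge (24p)^2(st)^3$ makes the slack $O(st\,p\,\delta)$ smaller than $\tfrac12$. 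By the parity remark, this forces $\langle M,uv^\top\rangle\ge D$.

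\textbf{Main obstacle.} The hard step is the quantitative perturbation analysis in items 2 to 5. One must control $p$th roots near $0$, where $x\mapsto x^{1/p}$ is not Lipschitz. The zero targets, e.g.\ $|w_{r_2}^\top h_{c_3}|^p\le\epsilon$, only give $|w_{r_2}^\top h_{c_3}|\le\epsilon^{1/p}$, and this is presumably what the second term $(24p)^p(st)^{p+1/2}$ in the definition of $N$ is designed to absorb. One must also make sure the basis change $G$ is well conditioned. I would first bound $\det[h_{c_2},h_{c_3}]$ away from $0$ using the near-exact $2\times2$ system on rows $r_2,r_3$ and columns $c_2,c_3$. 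Then I would propagate entrywise bounds of the form $|\,|x|^p-1|\le\eta\Rightarrow||x|-1|\le\eta$ for the unit targets, and $|x|\le\eta^{1/p}$ for the zero targets, carefully through the bilinear products.
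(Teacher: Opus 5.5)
\textbf{Verdict: the only-if direction has a genuine gap.} Your if-direction is correct; it is the paper's construction with the factor $\tfrac12$ moved from $H$ to $W$. The skeleton of your only-if direction is also the paper's: pin down the $2\times2$ core, normalize by $GL_2$, read $u,v$ off the signs, count disagreements, and conclude by integrality. The problem is item~1, and it prevents the deferred quantitative steps from closing for the instance with the stated $N$.

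\textbf{Why item 1 is too weak.} Picking one good row or column per large block gives only per-entry accuracy $\epsilon=\sqrt{st/N}$ on the core $\{r_2,r_3\}\times\{c_2,c_3\}$. For the zero targets this yields only $|w_{r_2}^\top h_{c_3}|\le\epsilon^{1/p}=(st/N)^{1/(2p)}$.
\begin{itemize}
\item After normalization, this is the off-diagonal entry $\beta$ in $w_{r_2}\approx(\pm1,\beta)$.
\item It enters $h_j$ linearly through $|w_{r_2}^\top h_j|\approx1$.
\item So the near-$1$ top-left entries only satisfy $|w_i^\top h_j|^p\ge 1-O(p\,\epsilon^{1/p})$, and the slack is of order $p\,st\,(st/N)^{1/(2p)}$.
\end{itemize}
With the theorem's $N$ this slack is not below $\tfrac12$ once $p\ge2$. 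For $p=2$ one has $N=2304(st)^3$, so $(st/N)^{1/4}=(48st)^{-1/2}$ and the slack is of order $\sqrt{st}$. In general your setup would need $N\gtrsim st\,(Cp\,st)^{2p}$. So the second term $(24p)^p(st)^{p+1/2}$ does not absorb the $p$th-root loss in your setup, contrary to your guess.

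A smaller point: unit targets cost no factor of $p$ when taking roots, since $\bigl||x|-1\bigr|\le\bigl||x|^p-1\bigr|$. Your $\delta=O(p\epsilon)$ followed by $c\,p\,\delta$ therefore overcounts a factor of $p$, and as written even the first term of $N$ would not give slack below $\tfrac12$ for large $p$.

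\textbf{The missing idea.} You need to exploit the $N^2$ multiplicity of the bottom-right blocks rather than the $N$ multiplicity of single rows. The paper first symmetrizes. Replacing every row of $R$ (resp.\ $R'$) by the best one, and then every column of $C$ (resp.\ $C'$) by the best one, does not increase the error. So without loss of generality $(W,H)$ has the form \eqref{eq:solonlyif}. Then:
\begin{itemize}
\item $E_{\widetilde R\widetilde C}=N^2(\cdots)\le st$, so each core entry is accurate to $\varepsilon=\sqrt{st}/N$.
\item The zero-target entries then satisfy $|r_2|,|r_1'|\le\varepsilon^{1/p}\le\frac{1}{24p\,st}$. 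This is exactly what the second term of $N$ is for.
\item The side blocks only need $\sqrt{st/N}\le\frac{1}{24p\,st}$, which the first term provides.
\item Solving the near-diagonal $2\times2$ system (Lemma~\ref{lem:bounds05}) gives $\bigl||b_j|-\tfrac12\bigr|\le\eta\le\frac{1}{12p\,st}$.
\item This yields $E_{IJ}\ge(1-\frac{1}{4st})^2(|F_0|+|F_1|)$, and your parity argument finishes.
\end{itemize}
A joint averaging over quadruples $(r_2,r_3,c_2,c_3)$ with Markov's inequality could also recover the $N^{-1}$ rate on the core. However, it loses constant factors that you would then have to fit into the given $N$.
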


\begin{proof}
Let us denote 
\[
I=\{1,\dots,s\},\quad
R=\{s+1,\dots,s+N\},\quad
R'=\{s+N+1,\dots,s+2N\}, 
\]
 the indices of the three row blocks of $X$, and
\[
J=\{1,\dots,t\},\quad
C=\{t+1,\dots,t+N\},\quad
C'=\{t+N+1,\dots,t+2N\}, 
\]
 the indices of the three column blocks of $X$.

\textbf{If part.}
Suppose first that $(M,D)$ is a yes-instance of \textsc{Cut-Norm}.
It means that there exist $u\in\{\pm1\}^s$ and
$v\in\{\pm1\}^t$ such that $u^\top Mv\geq D$.
We define $W\in\mathbb{R}^{(s+2N)\times 2}$ and $H\in\mathbb{R}^{2\times (t+2N)}$ as follows:
\[
W(i,:)=
\begin{cases}
(1,\,u_i), & i\in I,\\
(2,\,0), & i\in R,\\
(0,\,2), & i\in R',
\end{cases}
\qquad
H(:,j)=
\begin{cases}
\tfrac{1}{2}(1,v_j)^\top, & j\in J,\\[1mm]
(1,0)^\top, & j\in C,\\[1mm]
(0,1)^\top, & j\in C'.
\end{cases}
\]
In order to show that $\|X-|WH|^{\circ p}\|_F^2 \leq T$, we examine the nine blocks of $X$: 
\begin{itemize}
    \item $I\times J$: 
    For $(i,j)\in I\times J$, we have $W(i,:)=(1,u_i)$ and $H(:,j)=\tfrac{1}{2}(1,v_j)^\top$ which gives
    \[
    W(i,:)H(:,j)=\frac{u_iv_j+1}{2}\in\{0,1\}.
    \]
    The objective function on $X(I,J)$ is 
\begin{eqnarray*}
\|X(I,J)-|W(I,:)H(:,J)|^{\circ p}\|_F^2 & = & \sum_{i=1}^s\sum_{j=1}^t\left(\frac{M_{ij}+1}{2}-\frac{u_iv_j+1}{2}\right)^2\\
& = & \frac{1}{4}\sum_{i=1}^s\sum_{j=1}^t M_{i,j}^2-2u_iv_jM_{ij}+(u_iv_j)^2\\
& = & \frac{1}{4}\left(2st-2\sum_{i=1}^s\sum_{j=1}^tu_iv_jM_{ij}\right)\\
& = & \frac{st-\langle M,uv^\top \rangle}{2}.
\end{eqnarray*}

\item $R\times C$ and $R'\times C'$: the matrix $|WH|^{\circ p}$ is equal to $2^p$, hence
    \[\|X(R,C)-|W(R,:)H(:,C)|^{\circ p}\|_F^2=\|X(R',C')-|W(R',:)H(:,C')|^{\circ p}\|_F^2=0.\]

    \item $R\times C'$ and $R'\times C$: the matrix $|WH|^{\circ p}$ is equal to $0$, hence
    \[\|X(R,C')-|W(R,:)H(:,C')|^{\circ p}\|_F^2=\|X(R',C)-|W(R',:)H(:,C)|^{\circ p}\|_F^2=0.\]
    
    \item $I\times C$, $I\times C'$, $R\times J$, $R'\times J$: every inner product has absolute value~$1$, so $|{\cdot}|^p=1$, matching the entries of $X$ exactly.
\end{itemize} 
Finally,
\[
\|X-|WH|^{\circ p}\|_F^2 = \frac{st-\langle M,uv^\top \rangle}{2} \leq \frac{st-D}{2} = T.
\]
Hence $(X,T)$ is a yes-instance of rank-2 \textsc{FroEPMF}.

\medskip
\noindent
\textbf{Only if part.}
Suppose now that $(X,T)$ is a yes-instance of rank-2 \textsc{FroEPMF}, that is, there exist
$W\in\mathbb{R}^{m\times 2}$ and $H\in\mathbb{R}^{2\times n}$ such that
\begin{equation}\label{eq:hypo-onlyif}
    \|X-|WH|^{\circ p}\|_F^2\leq T.
\end{equation}
We show that this implies that $(M,D)$ is a yes-instance of \textsc{Cut-Norm}.
To do so, we exploit the repeated rows and columns of $X$.
For a fixed $H$, all rows
$X(i,:)$ with $i\in R$ are the same; hence, we may choose all the rows $W(i,:)$, $i\in R$, to be equal in an optimal solution.
The same argument applies to the rows indexed by $R'$, and symmetrically to the columns indexed by $C$ and $C'$.
Hence, without loss of generality, we may assume that $(W,H)$ has the form
\begin{equation}\label{eq:solonlyif}
W=
\begin{pmatrix}
a&a'\\
r_1\mathbf 1_N & r_2\mathbf 1_N\\
r'_1\mathbf 1_N & r'_2\mathbf 1_N
\end{pmatrix},
\qquad
H=
\begin{pmatrix}
b&b'\\
c_1\mathbf 1_N&c_2\mathbf 1_N\\
c'_1\mathbf 1_N&c'_2\mathbf 1_N
\end{pmatrix}^\top,
\end{equation}
for some $a,a'\in\mathbb{R}^s$, $b,b'\in\mathbb{R}^t$, and
$r_1,r'_1,r_2,r'_2,c_1,c'_1,c_2,c'_2\in\mathbb{R}$.
From this solution of the rank-2 \textsc{FroEPMF} instance, we construct a solution of the \textsc{Cut-Norm} instance.
We define $u\in\{\pm1\}^s$ and $v\in\{\pm1\}^t$ using the signs of $a$ and $b$:
\begin{equation}\label{eq:def_sigma_tau}
u_i=
\begin{cases}
+1,& \text{when~~} a_ia'_i\geq 0,\\
-1,& \text{when~~} a_ia'_i<0,
\end{cases}
\quad
v_j=
\begin{cases}
+1,& \text{when~~} b_jb'_j\geq 0,\\
-1,& \text{when~~} b_jb'_j<0.
\end{cases}
\end{equation}
It remains to prove that with this solution $(u,v)$ and the hypothesis \eqref{eq:hypo-onlyif}, we have $\langle M,uv^\top \rangle\geq D$.

We now decompose the objective according to four parts of $X$, following the division $[s+2N]=I\cup \widetilde{R}$ and $[t+2N]=J\cup \widetilde{C}$, where $\widetilde{R}=R\cup R'$ and $\widetilde{C}=C\cup C'$:
\[
\|X-|WH|^{\circ p}\|_F^2 = E_{IJ}+E_{I\widetilde{C}}+E_{\widetilde{R}J}+E_{\widetilde{R}\widetilde{C}},
\]
where $E_{IJ}$ is the contribution of the top-left block $I\times J$,
$E_{I\widetilde{C}}$ is the contribution of the top-right blocks $I\times \widetilde{C}$, 
$E_{\widetilde{R}J}$ the contribution of the bottom-left blocks $\widetilde{R}\times J$,
and $E_{\widetilde{R}\widetilde{C}}$ the contribution of the bottom-right blocks
$\widetilde{R}\times \widetilde{C}$. Note that since
\[ T=\frac{st-D}{2}\leq st, \]
we have $\|X-|WH|^{\circ p}\|_F^2\leq T \leq st$.

\medskip
\noindent\textbf{Step 1: {the $\widetilde{R} \times \widetilde{C}$ bottom-right block} constrains $r_1,r_1',r_2,r_2',c_1,c_1',c_2,c_2'$.}  
We first analyze the four $N\times N$ blocks in the bottom-right corner. Their
contribution is
\begin{eqnarray}
E_{\widetilde{R}\widetilde{C}} & = &  \|X(\widetilde{R},\widetilde{C})-|W(\widetilde{R},:)H(:,\widetilde{C})|^{\circ p}\|_F^2 \notag\\
&=&
N^2\Bigl(
(2^p-|r^\top c|^p)^2
+|r^\top c'|^{2p}
+|r'^\top c|^{2p}
+(2^p-|r'^\top c'|^p)^2
\Bigr).
\label{eq:EBR}
\end{eqnarray}
where $r = [r_1,r_2]$, and similarly for $r'$ and $c$. Now suppose $c'=\lambda c$ and fix $x=|r^\top c|^p$, $y=|r'^\top c|^p$, hence, {using Lemma~\ref{lem:abxy_inequality} with $a=2^p$ and $b=|\lambda|^p$,}  
\[ 
E_{\widetilde{R}\widetilde{C}}  =   N^2\Bigl(
(2^p-x)^2
+|\lambda|^{2p}x^{2}
+y^{2}
+(2^p-|\lambda|^{p}y)^2
\Bigr)
 \geq  2^{2p}N^2, 
\] 
a contradiction since $E_{\widetilde{R}\widetilde{C}}\leq \|X-|WH|^{\circ p}\|_F^2\leq T \leq st$. Since $WH=(WQ)(Q^{-1}H)$ for any invertible matrix $Q$, we may  assume $c=(1,0)$ and
$c'=(0,1)$.
Therefore 
\begin{eqnarray}
E_{\widetilde{R}\widetilde{C}} & = &
N^2\bigl((2^p-r_1^p)^2+|r_2|^{2p}+|r_1'|^{2p}+(2^p-{r_2'}^p)^2\bigr). 
\label{eq:EBR3}
\end{eqnarray} 
Since $E_{\widetilde R\widetilde C}\leq st$ and all four terms in parentheses in \eqref{eq:EBR3} are nonnegative, each of these terms is at most $st/N^2$. 
Therefore, by denoting $\varepsilon=\sqrt{st}/N<1$, we have $r_1^p\in[2^p-\varepsilon,2^p+\varepsilon]$ and $r_1\geq 1$ since $r_1^p\ge 2^p-1\ge 1$ (the same holds for $r_2'$).
Moreover, for $\alpha,\beta\geq 1$, we have $|\alpha^p-\beta^p|\geq |\alpha-\beta|$. 
Thus,
\[
        r_1\in[2-\varepsilon,2+\varepsilon],
        \qquad
        r'_2\in[2-\varepsilon,2+\varepsilon].
\]
Since $\varepsilon<1$, we also have $\varepsilon\leq \varepsilon^{1/p}$.
For simplicity, we use the weaker but uniform following bounds:
\begin{equation}\label{eq:boundparameters}
r_1,~r'_2\in\left[2-\varepsilon^{\frac{1}{p}},2+\varepsilon^{\frac{1}{p}}\right],
~~~
r'_1,~r_2\in\left[-\varepsilon^{\frac{1}{p}},\varepsilon^{\frac{1}{p}}\right].
\end{equation}

\medskip
\noindent\textbf{Step 2: {the $I\times\widetilde{C}$ and $\widetilde{R}\times J$ side blocks} constrain the entries of $a,a',b,b'$.}  
We analyze the two rectangular blocks $I\times\widetilde{C}$:  
\begin{eqnarray*}
E_{I\widetilde{C}}
& = & \|X(I,C\cup C')-|W(I,:)H(:,C\cup C')|^{\circ p}\|_F^2 
 =  N\sum_{i=1}^s
\Bigl((1-|a_i|^p)^2+(1-|a'_i|^p)^2\Bigr). 
\end{eqnarray*} 
Since $E_{I\widetilde{C}}\leq st$, $\bigl||a_i|^p-1\bigr|\leq \sqrt{\frac{st}{N}}, \qquad \bigl||a'_i|^p-1\bigr|\leq \sqrt{\frac{st}{N}}$, that is, 
\begin{equation}\label{eq:boundsa1a2}
        \bigl||a_i|-1\bigr|\leq \sqrt{\frac{st}{N}}, \qquad \bigl||a'_i|-1\bigr|\leq \sqrt{\frac{st}{N}}. 
\end{equation} 
 In the same way, for the blocks $R\times J$ and $R'\times J$, 
\[
  \bigl||r_1 b_j+r_2 b_j'|^p-1\bigr|\leq \sqrt{\frac{st}{N}}, \quad \text{which gives } \bigl||r_1 b_j+r_2b_j'|-1\bigr|\leq \sqrt{\frac{st}{N}}. 
\]
and
\[
  \bigl||r'_1 b_j+r'_2 b_j'|^p-1\bigr|\leq \sqrt{\frac{st}{N}}, \quad \text{which gives } \bigl||r'_1  b_j+r'_2 b_j'|-1\bigr|\leq \sqrt{\frac{st}{N}},
\]
Using these relations and the bounds \eqref{eq:boundparameters}, we obtain, by Lemma~\ref{lem:bounds05} (see Appendix~\ref{sec:bounds05}), the following bounds on $|b_j|$ and $|b'_j|$:
\begin{equation}\label{eq:boundsb1b2}
        \left| |b_j|-\frac{1}{2} \right|\leq \sqrt{N}\varepsilon+\varepsilon^{1/p}, \qquad \left||b'_j|-\frac{1}{2}\right|\leq \sqrt{N}\varepsilon+\varepsilon^{1/p}.
\end{equation}

\medskip
\noindent\textbf{Step 3: the {$I\times J$} top-left block controls the disagreements between $M$ and $uv^\top$.} 
We now turn to the block $I\times J$.
We distinguish the two types of disagreements
between $M$ and $uv^\top$, in the sets $F_0$ and $F_1$ defined below, together with the two types of agreeing entries, in the sets $F_2$ and $F_3$: 
\begin{eqnarray*}
F_0&=&\{(i,j)\in I\times J:\ M_{ij}=-1,\ u_iv_j=1\},\\
F_1&=&\{(i,j)\in I\times J:\ M_{ij}=1,\ u_iv_j=-1\},\\
F_2&=&\{(i,j)\in I\times J:\ M_{ij}=1,\ u_iv_j=1\},\\
F_3&=&\{(i,j)\in I\times J:\ M_{ij}=-1,\ u_iv_j=-1\}.
\end{eqnarray*}
The sets $F_0$ and $F_1$ are exactly the disagreement positions between $M$ and $uv^\top$, and we have 
\begin{equation}\label{eq:disagreement_inner_product}
\langle M,uv^\top\rangle = st-2(|F_0|+|F_1|).
\end{equation}
In the only-if direction, the factors~\eqref{eq:solonlyif} are supposed to be arbitrary.
Thus, unlike in
the construction of the if direction, we cannot assume that
$\tilde{a}_i:=(a_i,a'_i)^\top=(1,u_i)^\top$ and $\tilde{b}:=(b_j,b'_j)^\top=\frac{1}{2}(1,v_j)^\top$ on the block $I\times J$.
However, we have seen that the side blocks force an approximate version of this structure:
$|a_i|$ and $|a'_i|$ are close to $1$, while
$|b_j|$ and $|b'_j|$ are close to $1/2$.
Hence, the signs defined in
\eqref{eq:def_sigma_tau} allow us to link the values
$|\tilde{a}_i^\top \tilde{b}_j|^p$ to the objective~\eqref{eq:disagreement_inner_product}.
In the following, we decompose $E_{IJ}$ over the four sets and remove the nonnegative contributions coming from the agreeing positions:
\[
\begin{aligned}
E_{IJ}
&=
\sum_{(i,j)\in F_0} (0-|\tilde{a}_i^\top \tilde{b}_j|^p)^2
+
\sum_{(i,j)\in F_1} (1-|\tilde{a}_i^\top \tilde{b}_j|^p)^2
+
\sum_{(i,j)\in F_2} (1-|\tilde{a}_i^\top \tilde{b}_j|^p)^2
+
\sum_{(i,j)\in F_3} (0-|\tilde{a}_i^\top \tilde{b}_j|^p)^2
\\
&\geq
\sum_{(i,j)\in F_0} |\tilde{a}_i^\top \tilde{b}_j|^{2p}
+
\sum_{(i,j)\in F_1} (1-|\tilde{a}_i^\top \tilde{b}_j|^p)^2.
\end{aligned}
\]
Now we have to bound $|\tilde{a}_i^\top \tilde{b}_j|^p$ on $F_0$ and $F_1$.
By denoting $\eta:=\sqrt{\frac{st}{N}}+\left(\frac{\sqrt{st}}{N}\right)^{\frac{1}{p}}$ and using \eqref{eq:boundsa1a2},\eqref{eq:boundsb1b2}, we have 
\begin{equation}\label{eq:finalboundsab}
    \big| |a_i|-1\big| \leq \eta, \quad \left||b_j|-\frac{1}{2}\right|\leq \eta, \quad \big| |a'_i|-1\big| \leq \eta, \quad \text{ and } \quad \left||b'_j|-\frac{1}{2}\right|\leq \eta.
\end{equation}
Since $N\geq (24p)^2(st)^3$, 
\[
        \sqrt{\frac{st}{N}}
        \le
        \sqrt{\frac{st}{(24p)^2(st)^3}}
        =
        \frac{1}{24pst}.
\]
Moreover, since $N\ge (24p)^p(st)^{p+\frac12}$, we get
\[
        \left(\frac{\sqrt{st}}{N}\right)^{1/p}
        \le
        \left(
        \frac{\sqrt{st}}{(24p)^p(st)^{p+\frac{1}{2}}}
        \right)^{1/p}
        =
        \frac{1}{24pst}.
\]
Therefore
\begin{equation}\label{eq:boundetaonlypst}
        \eta
        =
        \sqrt{\frac{st}{N}}
        +
        \left(\frac{\sqrt{st}}{N}\right)^{1/p}
        \le
        \frac{1}{12pst}.
\end{equation}

Let us analyze the indices in $F_0$ and $F_1$ separately: 
\begin{itemize}
    \item For $(i,j)\in F_0$, $u_iv_j=1$, hence both terms $a_ib_j$ and $a'_ib'_j$ have the same sign.
    Thus  
    \[
    |\tilde{a}_i^\top \tilde{b}_j| = |a_ib_j+a'_ib'_j|=|a_ib_j|+|a'_ib'_j|\geq 2\left(1-\eta\right)\left(\frac{1}{2}-\eta\right)\geq 1-3\eta.
    \]
    By combining this bound on $|\tilde{a}_i^\top \tilde{b}_j|$ with~\eqref{eq:boundetaonlypst}, 
    \begin{equation}\label{eq:boundF0}
    |\tilde{a}_i^\top \tilde{b}_j|^p \geq (1-3\eta)^p\geq 1-3p\eta\geq 1 - \frac{1}{4st}.
    \end{equation}
    
    \item For $(i,j)\in F_1$, $u_iv_j=-1$; hence both terms $a_ib_j$ and $a'_ib'_j$ have opposite signs.
    Thus 
    \[
    |\tilde{a}_i^\top \tilde{b}_j| = ||a_ib_j|-|a'_ib'_j||\leq (1+\eta)\left(\frac{1}{2}+\eta\right)-(1-\eta)\left(\frac{1}{2}-\eta\right)=3\eta.
    \]
    By combining this bound on $|\tilde{a}_i^\top \tilde{b}_j|$ with~\eqref{eq:boundetaonlypst}, 
    \begin{equation}\label{eq:boundF1}
    |\tilde{a}_i^\top \tilde{b}_j|^p \leq (3\eta)^p\leq \left(\frac{1}{4pst}\right)^p\leq \frac{1}{4st}.
    \end{equation}
\end{itemize}
By using~\eqref{eq:boundF0} and \eqref{eq:boundF1}, we obtain
\[
\begin{aligned}
E_{IJ}
&\geq
\sum_{(i,j)\in F_0} |\tilde{a}_i^\top \tilde{b}_j|^{2p}
+
\sum_{(i,j)\in F_1} (1-|\tilde{a}_i^\top \tilde{b}_j|^p)^2\\
&\geq
\left(1-\frac{1}{4st}\right)^2
|F_0|+\left(1-\frac{1}{4st}\right)^2
|F_1|\\
&\geq
\left(1-\frac{1}{4st}\right)^2
\bigl(|F_0|+|F_1|\bigr) \geq
\left(1-\frac{1}{2st}\right)
\bigl(|F_0|+|F_1|\bigr).
\end{aligned}
\]
Since $E_{IJ}\leq T$, $T\leq \frac{st}{2}$ and $st\geq 1$, 
\[
        |F_0|+|F_1|
        \leq
        \frac{T}{1-\frac1{2st}}=
        T+\frac{\frac{T}{2st}}{1-\frac1{2st}}
        \leq
        T+\frac{\frac{1}{4}}{1-\frac{1}{2}}
        = T+\frac{1}{2}.
\]
Hence
\[
        \langle M,uv^\top\rangle = st-2(|F_0|+|F_1|) > st - 2(T+\frac{1}{2}) \geq st - 2\frac{st-D}{2}-1=D-1.
\]
Since $\langle M,uv^\top\rangle$ and $D$ are integers, this implies $\langle M,uv^\top\rangle\geq D$.
\end{proof}

\paragraph{Discussion.} As in Section~\ref{sec:strongNPhardness}, the above reduction, see \eqref{eq:reductionFroEPMF}, uses only a fixed number of distinct values in the input matrix $X \in \{0,1,2^p\}^{m \times n}$. It remains open whether the same hardness result holds for binary input matrices. 
We leave this as a direction of future research.

\section{Conclusion} 
\label{sec:concl}

In this paper, we studied the computational complexity of entrywise power matrix factorization (EPMF), a class of non-linear matrix decompositions (NMDs). 
We showed that EPMF is tractable in the exact fixed-rank regime, but otherwise it is NP-hard. 
For the exact fixed rank regime, we showed that the problem is equivalent to a low-rank matrix signing (LRMS) problem: flip the signs of some entries of a given matrix to make it rank $r$. We proposed a constructive algorithm polynomial in $(m,n)$ to solve this problem (Theorem~\ref{thm:fixed-rank-signing}). 
Moreover, we showed that the algorithm is FPT in the parameter $r$ for generic matrices (Theorem~\ref{thm:FPTgeneric}). 
When the rank is not fixed, the problem is strongly NP-hard in the exact case (Corollary~\ref{thm:exactEPMF-strongly-nphard}), and the least-squares optimization problem is already strongly NP-hard as well for the smallest nontrivial rank, namely $r=2$ (Theorem~\ref{thm:red_CUT_to_R2FroEPMF}). 

Further works include the study of conditions under which EPMF, and other NMDs, are tractable (e.g., using convexifications, or analyzing the loss landscape, as done for many low-rank matrix approximation problems), as well as the design of efficient algorithms and their use in real-world applications. 
We could also explore whether we can further reduce the computational cost of the algorithm that solves the sign problem in Theorem~\ref{thm:fixed-rank-signing}.

\paragraph*{Use of AI assistance.} During the preparation of this manuscript, the authors used ChatGPT (OpenAI) to assist with improving the presentation of several proofs, exploring alternative proof strategies, and checking the clarity and logical consistency of mathematical arguments. The AI system was used solely as a writing and reasoning aid; all mathematical results, proofs, and conclusions were independently verified by the authors, who take full responsibility for the content of this manuscript.

{\small
\bibliographystyle{spmpsci}
\bibliography{bib.bib}}

\newpage

\appendix
\section{Lemmas for the bounds on the coefficients around \texorpdfstring{$1/2$}{1/2}}
\label{sec:bounds05}

\begin{lemma}[Bounds on the coefficients around $1/2$]\label{lem:bounds05}

\noindent
Let $\varepsilon>0$, $p\geq 1$, $N\geq 1$ such that $\varepsilon^{\frac{1}{p}}< \frac{1}{2}$, and define $\delta:=\sqrt{N}\varepsilon, \eta:=\varepsilon^{1/p}$. Let $x_1,x_2\in\mathbb{R}$ be such that
\[
\big||(2+a)x_1+bx_2|-1\big|\leq \delta,
\qquad
\big||cx_1+(2+d)x_2|-1\big|\leq \delta, 
\]
where $a,b,c,d\in[-\eta,\eta]$. Then, for $i=1,2$, 
\[
\left||x_i|-\frac{1}{2}\right|
\leq
\delta+\eta, \quad \text{that is}, \quad 
\left||x_i|-\frac{1}{2}\right|
\leq
\sqrt{N}\varepsilon+\varepsilon^{1/p}.
\]
\end{lemma}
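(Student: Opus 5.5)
The plan is a perturbation argument around the unperturbed system $2x_1=y_1$, $2x_2=y_2$. Set
\[
y_1:=(2+a)x_1+bx_2,\qquad y_2:=cx_1+(2+d)x_2 .
\]
The hypotheses then say $\bigl||y_i|-1\bigr|\leq\delta$ for $i=1,2$, and in particular $|y_i|\leq 1+\delta$. The idea is to show that $2x_i$ differs from $y_i$ by at most a multiple of $\eta$. This will give that $|x_i|$ is close to $|y_i|/2$, which in turn is close to $1/2$.

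\textbf{Step 1: an a priori bound on the $x_i$.} Let $\mu:=\max(|x_1|,|x_2|)$ and suppose it is attained at index $i$. From the $i$th equation,
\[
2|x_i|\leq |y_i|+\eta\,(|x_1|+|x_2|)\leq 1+\delta+2\eta\mu .
\]
Hence $(2-2\eta)\mu\leq 1+\delta$, that is,
\[
\mu\leq\frac{1+\delta}{2(1-\eta)}\leq 1+\delta .
\]
The last inequality uses $\eta=\varepsilon^{1/p}<1/2$, which gives $2(1-\eta)>1$.

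\textbf{Step 2: comparison with $y_i/2$.} For $i=1$ we have $2x_1-y_1=-ax_1-bx_2$, and for $i=2$ we have $2x_2-y_2=-cx_1-dx_2$. In both cases $|2x_i-y_i|\leq 2\eta\mu$. By the reverse triangle inequality,
\[
\Bigl||x_i|-\tfrac{|y_i|}{2}\Bigr|\leq\eta\mu\leq\eta(1+\delta).
\]
Combining this with $\bigl|\tfrac{|y_i|}{2}-\tfrac12\bigr|\leq\tfrac{\delta}{2}$ yields
\[
\Bigl||x_i|-\tfrac12\Bigr|\leq \tfrac{\delta}{2}+\eta+\eta\delta .
\]

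\textbf{Step 3: closing the bound.} Since $\eta<1/2$, we have $\eta\delta\leq\delta/2$. The right-hand side is therefore at most $\delta+\eta=\sqrt N\varepsilon+\varepsilon^{1/p}$, which is the claimed estimate.

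The only delicate point is Step 1. The bound on $\mu$ must hold without any prior knowledge of the size of $x$, and it must be sharp enough that the cross term $\eta\delta$ can be absorbed into $\delta/2$ in Step 3. The hypothesis $\varepsilon^{1/p}<1/2$ is used exactly in these two places.
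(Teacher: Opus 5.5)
Your proof is correct and follows essentially the same route as the paper: both write $2x_i$ as $y_i$ minus a perturbation bounded by $\eta(|x_1|+|x_2|)$, and both close a self-referential inequality using $\eta<1/2$. The only difference is that the paper bootstraps directly on $\max_i|x_i-\sigma_i/2|$ (writing $y_i=\sigma_i+e_i$) and gets $(\delta+\eta)/(2-2\eta)$ in one step, whereas you bootstrap on $\max_i|x_i|$ and then absorb $\eta\delta\le\delta/2$; had you kept $\mu\le(1+\delta)/(2-2\eta)$ instead of rounding it to $1+\delta$, your intermediate bound $\delta/2+\eta\mu$ would coincide exactly with the paper's.
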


\begin{proof}
From the assumptions, there exist signs $\sigma_1,\sigma_2\in\{-1,1\}$ and errors
$e_1,e_2$ with $|e_1|,|e_2|\leq \delta$ such that
\[
(2+a)x_1+bx_2=\sigma_1+e_1,
\quad \text{and} \quad cx_1+(2+d)x_2=\sigma_2+e_2,
\]
which gives
\[
2\left(x_1-\frac{\sigma_1}{2}\right) = e_1-ax_1-bx_2
\quad \text{and} \quad 2\left(x_2-\frac{\sigma_2}{2}\right) = e_2-dx_2-cx_1.
\]
By using the bounds on the different elements, we obtain
\begin{equation}\label{eq:twoeqbounds}
2\left|x_1-\frac{\sigma_1}{2}\right|
\leq
\delta+\eta |x_1|+\eta |x_2|
\quad \text{and} \quad
2\left|x_2-\frac{\sigma_2}{2}\right|
\leq
\delta+\eta |x_2|+\eta |x_1|.
\end{equation}
Set
\[
M:=\max_{i=1,2}\left|x_i-\frac{\sigma_i}{2}\right| \quad \text{and} \quad B:=\max_{i=1,2}\left|x_i\right| \quad \text{for which} \quad B\leq \frac{1}{2}+M.
\]
Using both equations of \eqref{eq:twoeqbounds}, we obtain $2M \leq \delta + 2\eta B \leq \delta + \eta + 2\eta M$, hence, $(2-2\eta)M
\leq
\delta+\eta$.
\noindent
Since $\eta< 1/2$, for $i=1,2$,
\[
    \left||x_i|-\frac{1}{2}\right|
 \leq  \left|x_i-\frac{\sigma_i}{2}\right| \leq M \leq \frac{\delta+\eta}
{2-2\eta}\leq \delta+\eta.
\] 
\end{proof}

\begin{lemma}\label{lem:abxy_inequality}
    Let $a,b\in \mathbb{R}$. Then
    \[
        a^2 \; = \; \min_{x,y\in \mathbb{R}} (a-x)^2+b^2x^2+y^2+(a-by)^2.
    \]
    \begin{proof}
        Let $f(x,y)=(a-x)^2+b^2x^2+y^2+(a-by)^2$. Then
        \[
         \nabla f(x,y)=\left(\begin{array}{c}
             -2(a-x)+2b^2x \\
             2y-2b(a-by) \\ 
         \end{array}\right)=
         \left(\begin{array}{c}
             (2+2b^2)x-2a \\
             (2+2b^2)y-2ab \\ 
         \end{array}\right), \qquad
         \nabla^2 f(x,y)=\left(\begin{array}{cc}
             2+2b^2 & 0 \\
             0 & 2+2b^2 \\ 
         \end{array}\right),
        \]
        which means that $f$ is strictly convex and attains the minimum when $\nabla f=0$, that is at $x_\star=\frac{a}{1+b^2}$ and $y_\star=\frac{ab}{1+b^2}$. Hence 
        \[
         f(x_\star,y_\star)=\left(a-\frac{a}{1+b^2}\right)^2+\frac{a^2b^2}{(1+b^2)^2}+\frac{a^2b^2}{(1+b^2)^2}+\left(a-\frac{ab^2}{1+b^2}\right)^2=\frac{a^2b^4+2a^2b^2+a^2}{(1+b^2)^2}=a^2.
        \]
    \end{proof}
\end{lemma}

\end{document}